\documentclass{llncs}
\usepackage{amsmath,amsfonts,amssymb}
\usepackage{graphicx}
\usepackage{cite}
\usepackage{cyrillic}

\begin{document}

\sloppy 

\def\pal{{\cal P}}
\def\rich{{\cal R}}
\def\oddpal{{\cal O}}
\def\evenpal{{\cal E}}
\def\antipal{{\cal A}}
\def\E{{\sf E}}
\def\O{{\sf O}}
\linespread{0.93}

\title{On the Combinatorics of Palindromes and Antipalindromes}
\author{Chuan Guo\inst{1} \and Jeffrey Shallit\inst{1} \and Arseny M. Shur\inst{2}}

\institute{University of Waterloo, Ontario, Canada\\ \email{\{c3guo,shallit\}@uwaterloo.ca}
\and
Ural Federal University, Ekaterinburg, Russia\\ \email{arseny.shur@urfu.ru}}

\maketitle

\begin{abstract}
We prove a number of results on the structure and enumeration of palindromes and antipalindromes.  In particular, we study conjugates of palindromes, palindromic pairs, rich words, and the counterparts of these notions for antipalindromes.
\end{abstract}

\section{Introduction}

Combinatorial and algorithmic studies of palindromes can be traced back to the 1970's, when they were considered as a promising tool to construct a ``hard'' context-free language, which cannot be recognized by a linear-time random access machine. Nevertheless, palindrome-based languages were proved to be linear-time recognizable \cite{KMP77,GaSe78,KRS15}. Recent topics of interest in the study of palindromes include, for example, \emph{rich words} (containing the maximum possible number of distinct palindromes; see \cite{DJP01,GJWZ09,BLL09}) and \emph{palstars} (products of even-length palindromes; see \cite{KMP77,RSW11,RiSh14}). Also, there is a popular modification of the notion of palindrome, where the reversal of a word coincides not with the word itself, but with the image of the word under a certain involution of the alphabet; see, e.g., \cite{KaMa10,PeSt12}. In the binary case, there is a unique such modification, called an \emph{antipalindrome}.

In this paper, we aim to fill certain gaps in the knowledge on combinatorics of palindromes and antipalindromes. The four subsequent sections are mostly independent. In Section~\ref{sectwo}, we study the distribution of palindromes among conjugacy classes and enumerate conjugates of palindromes. Section~\ref{secthree} is devoted to the words which are products of two palindromes; we prove some characterizations of this class of words and show that the number of $k$-ary words that are products of two odd-length palindromes is exactly $k$ times the number of $k$-ary words of the same length that are products of
two even-length palindromes. In Section~\ref{secfour}, we analyze the growth function for the language of binary rich words. We give the first nontrivial lower bound, of order $C^{\sqrt n}$ for a constant $C$, and provide some empirical evidence that this growth function does indeed have subexponential growth. Finally, in Section~\ref{secfive} we focus on antipalindromes. We show that antipalindromes share many common properties with palindromes, with a notable exception: the notion of a rich word becomes trivial.

\paragraph{Definitions and notation.}

We study finite words over finite alphabets, using the array notation $w=w[1..n]$ when appropriate. The notions of prefixes, suffixes, factors, periods and (integer) powers are as usual. We write $|w|$ for the length of $w$ and $\varepsilon$ for the empty word. For two words $v$ and $w$ of length $n$, their \emph{perfect shuffle} $v\sha w$ is the word $v[1]w[1]v[2]w[2]\cdots v[n]w[n]$. Thus, for example, ${\tt clip} \, \sha \, {\tt aloe} = {\tt calliope}$. Given a word $w$, let $w^R$ denote its reversal (e.g., $({\tt stressed})^R = {\tt desserts}$).  A word $w$ is a {\it palindrome} if $w = w^R$. A word is \emph{primitive} if it is not an integer power of a shorter word. Two words $u$ and $v$ are \emph{conjugates} if $u=xy$ and $v=yx$ for some words $x$ and $y$. Conjugacy is an equivalence relation. The following lemma is folklore.

\begin{lemma} \label{conjnum}
Let $u=z^i$ for a primitive word $z$. Then the conjugacy class of $u$ contains exactly $|z|$ words.
\end{lemma}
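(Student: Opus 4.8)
The plan is to exhibit an explicit bijection between the conjugacy class of $u = z^i$ and the set of cyclic rotations of $z$. Write $n = |z|$, so $|u| = in$. Every conjugate of $u$ has the form $u_j = u[j+1..in]\,u[1..j]$ for some $j$ with $0 \le j < in$; thus the class has at most $in$ elements, and the content of the lemma is that there are exactly $n$ distinct ones. First I would observe that since $u = z^i$, the word $u$ is periodic with period $n$, so $u[t] = u[t+n]$ for all valid $t$; consequently $u_j = u_{j+n}$ whenever both indices are in range. Hence every conjugate equals $u_j$ for some $j \in \{0,1,\dots,n-1\}$, giving the upper bound of $n$.

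For the lower bound I must show the words $u_0, u_1, \dots, u_{n-1}$ are pairwise distinct. Suppose $u_j = u_k$ with $0 \le j < k \le n-1$. Then $u$ has period $k-j$ (as well as period $n$, and trivially period $in$); since $0 < k-j < n \le in = |u|$, and $\gcd(k-j,n) \mid n$ is a period dividing $n$, I would like to conclude via the Fine–Wilf theorem that $u$ has period $\gcd(k-j,n)$, which is a proper divisor of $n$. Here I need the hypotheses of Fine–Wilf: $u$ has periods $p = k-j$ and $q = n$ with $p + q \le |u| + \gcd(p,q)$; since $p + q \le (n-1) + n < 2n \le in = |u|$ when $i \ge 2$, this holds, and the case $i = 1$ needs $p+q \le n + \gcd(p,q)$, which may fail — so for $i=1$ I argue directly: $u_j = u_k$ with $u$ of length $n$ forces $u = z$ to have period $k-j < n$, contradicting primitivity immediately. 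In the case $i \ge 2$, Fine–Wilf yields that $u$ has period $d := \gcd(k-j, n) < n$, so $u = (u[1..d])^{in/d}$ and in particular $z = u[1..n] = (u[1..d])^{n/d}$ is a proper power, contradicting the primitivity of $z$.

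The main obstacle is the boundary case $i = 1$ and, more generally, making sure the Fine–Wilf application is legitimate rather than circular; the cleanest route is to prove the distinctness of $u_0,\dots,u_{n-1}$ directly from primitivity. Concretely: if $u_j = u_k$ for some $0 \le j < k \le n-1$, set $d = \gcd(k-j,n)$ and $z' = z[1..d]$; using that $z$ has period $k-j$ (from $z^i$ having period $k-j$, restricting to the first $n$ positions when $i \ge 2$, or directly when $i=1$) together with the trivial period $n$, standard arguments on periods of words give $z = (z')^{n/d}$, contradicting primitivity. Since $d < n$ whenever $0 < k - j < n$, this shows all of $u_0, \ldots, u_{n-1}$ are distinct, and combining with the upper bound completes the proof. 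I would also remark that this shows $z$ itself has exactly $n$ conjugates, and $z^i$ has the same conjugacy class size, which is the natural way to state the result.
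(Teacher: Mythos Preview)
The paper does not actually prove this lemma; it is stated as folklore with no argument given, so there is nothing to compare your approach against. Your outline is sound, but there is a small gap in the $i=1$ case. You write that ``$u_j = u_k$ with $u$ of length $n$ forces $u = z$ to have period $k-j < n$, contradicting primitivity immediately.'' A linear period smaller than $|z|$ does not by itself contradict primitivity: for instance, $aba$ has period $2$ yet is primitive. The same issue recurs in your final paragraph, where you invoke ``the trivial period $n$'' of $z$ together with period $k-j$; on a word of length exactly $n$ the period $n$ is vacuous, so Fine--Wilf gives nothing there.

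What $u_j = u_k$ actually gives is stronger than a linear period: it says $u$ is invariant under the \emph{cyclic} shift by $k-j$. Since cyclic shifts form a group, $u$ is then invariant under the shift by $g = \gcd(k-j,\,|u|)$, and because $g$ divides $|u|$ this forces $u = (u[1..g])^{|u|/g}$. As $g \le k-j < n$, the primitive root of $u$ has length at most $g < n$, contradicting that $z$ (of length $n$) is that root. This single argument handles every $i \ge 1$ uniformly, with no need for Fine--Wilf; your Fine--Wilf route for $i \ge 2$ is correct but heavier than necessary, and does not extend to $i=1$ in the way you suggest.
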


We use two basic properties of periodic words due to Lyndon and Sch\"utzenberger \cite{LySch62}.

\begin{lemma} \label{lysch}
(i) For any nonempty words $u$ and $v$, the equality $uv=vu$ holds if and only if $u=z^i$ and $v=z^j$ for some word $z$ and positive integers $i,j$.\\
(ii) For any nonempty words $u$, $v$, and $w$, the equality $uw=wv$ holds if and only if $u=xy$, $v=yx$, $w=(xy)^ix$ for some words $x\ne\varepsilon$ and $y$, and nonnegative integer $i$.
\end{lemma}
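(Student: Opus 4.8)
The plan is to handle the easy (``if'') directions by direct substitution and the (``only if'') directions by induction; both arguments are classical and self-contained. For (i): if $u=z^i$ and $v=z^j$ then $uv=z^{i+j}=vu$. Conversely, assuming $uv=vu$, I would induct on $|uv|$. Taking $|u|\le|v|$ without loss of generality, comparing the first $|u|$ symbols of $uv$ and of $vu$ shows that $u$ is a prefix of $v$, say $v=uv'$. If $v'=\varepsilon$ we are done with $z=u$; otherwise substituting $v=uv'$ into $uv=vu$ and left-cancelling $u$ gives $uv'=v'u$ with $|uv'|<|uv|$, so by induction $u=z^k$ and $v'=z^\ell$ for a common word $z$ and $k,\ell\ge1$, whence $v=z^{k+\ell}$.

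For (ii): if $u=xy$, $v=yx$, $w=(xy)^ix$, then $uw=xy(xy)^ix=(xy)^{i+1}x=(xy)^ix\cdot yx=wv$. For the converse, observe that $|uw|=|wv|$ forces $|u|=|v|$, and induct on $|w|$. If $|w|\le|u|$, then (from $uw=wv$) $w$ is a prefix of $u$; writing $u=wt$ and cancelling $w$ yields $v=tw$, so $x=w\ne\varepsilon$, $y=t$, $i=0$ do the job. If $|w|>|u|$, then $u$ is a prefix of $w$; writing $w=uw'$ with $w'\ne\varepsilon$, substituting into $uw=wv$, and left-cancelling $u$ gives $uw'=w'v$ with $|w'|<|w|$. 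By the induction hypothesis there are $x\ne\varepsilon$, $y$, $j\ge0$ with $u=xy$, $v=yx$, $w'=(xy)^jx$, and then $w=uw'=(xy)^{j+1}x$, so we finish with $i=j+1$.

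Both proofs use nothing beyond left cancellation in the free monoid; in particular no appeal to the Fine--Wilf periodicity theorem is needed. The only places requiring a little attention are the bookkeeping about empty auxiliary words (the word $v'$ in part (i), and $t$ and $w'$ in part (ii); recall that $u$, $v$, $w$ themselves are assumed nonempty) and checking that the recursion in (ii) terminates: since $|u|\ge1$, each step strictly decreases $|w|$, and the case $|w|\le|u|$ is disposed of without recursion, so there is no separate base case to treat.
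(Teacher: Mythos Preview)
Your argument is correct. Both inductions are well-founded, the case analysis is complete, and the bookkeeping about nonemptiness (of $u$, $v$, $w$ versus the auxiliary $v'$, $t$, $w'$) is handled properly; in particular, in part~(ii) the boundary case $|w|=|u|$ is absorbed into the $|w|\le|u|$ case with $t=\varepsilon$, which is allowed since only $x$ is required to be nonempty.

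As for comparison with the paper: there is nothing to compare. The paper does not prove Lemma~\ref{lysch}; it simply quotes the statement as a classical fact due to Lyndon and Sch\"utzenberger~\cite{LySch62} and uses it as a black box. Your write-up therefore supplies a self-contained proof where the paper gives none, and it is exactly the standard elementary argument (prefix extraction plus left cancellation in the free monoid) that one finds in textbook treatments.
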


For a language $L$ over an alphabet $\Sigma$, its \emph{growth function} (also called combinatorial complexity or census function) is defined to be $C_L(n)=|L\cap\Sigma^n|$.

Below, we list some basic properties of palindromes.

\begin{proposition} \label{pro:xn}
For all integers $m, n \geq 1$, the word $x^m$ is a palindrome if and only if $x^n$ is a palindrome.
\label{prop1} \end{proposition}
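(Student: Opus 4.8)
The plan is to reduce the two-sided statement to a single condition on $x$: I would show that, for every integer $m\ge 1$, the word $x^m$ is a palindrome if and only if $x$ itself is a palindrome. Granting this auxiliary claim, the proposition is immediate, since then both ``$x^m$ is a palindrome'' and ``$x^n$ is a palindrome'' are equivalent to the single statement ``$x=x^R$'', and equivalence is transitive. (When $m=1$ the auxiliary claim is vacuous, which is harmless for the chaining.)

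First I would dispose of the easy direction of the auxiliary claim. Reversal is an anti-automorphism, i.e.\ $(uv)^R=v^Ru^R$, so by an obvious induction $(x^m)^R=(x^R)^m$. Hence if $x=x^R$, then $(x^m)^R=(x^R)^m=x^m$, so $x^m$ is a palindrome.

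For the converse, suppose $x^m$ is a palindrome. Then $x^m=(x^m)^R=(x^R)^m$. Since $|x^R|=|x|$, the word $x$ is the prefix of length $|x|$ of $x^m$, and $x^R$ is the prefix of length $|x|$ of $(x^R)^m$; as these two words coincide, we conclude $x=x^R$. (Equivalently: writing $x\cdot x^{m-1}=x^R\cdot (x^R)^{m-1}$ and cancelling the equal-length leading factors gives $x=x^R$ directly when $m\ge 1$.)

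I do not expect any genuine obstacle: the whole argument rests only on the anti-automorphism property of reversal and on the trivial fact that two equal words have the same prefix of any prescribed length, so it requires no appeal to Lemma~\ref{lysch} or to primitivity. The only place deserving a word of care is the bookkeeping at $m=1$, where both the claim and its use degenerate to triviality, and making sure the length comparison $|x|=|x^R|$ is invoked before comparing prefixes.
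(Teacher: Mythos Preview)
Your argument is correct. The reduction to the single claim ``$x^m$ is a palindrome $\iff$ $x$ is a palindrome'' is the natural one, and both directions are handled cleanly: the anti-automorphism identity $(x^m)^R=(x^R)^m$ gives the forward direction, and comparing length-$|x|$ prefixes of the equal words $x^m=(x^R)^m$ gives the converse.

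As for comparison with the paper: there is nothing to compare against. The paper lists this proposition among ``basic properties of palindromes'' and does not supply a proof, treating it as folklore. Your write-up is exactly the kind of elementary verification one would expect for such a statement and requires no machinery beyond reversal being an anti-automorphism.
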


\begin{proposition} \label{pro:uvuv}
For all nonempty palindromes $u, v$, the word $uv$ is a palindrome if and only if both $u, v$ are powers of some palindrome $z$.
\label{prop2}
\end{proposition}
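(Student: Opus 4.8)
The plan is to prove the two implications separately; essentially all the content sits in one short reversal identity together with two appeals to the results already established above.

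For the ``if'' direction I would argue directly. If $u=z^i$ and $v=z^j$ for a palindrome $z$ and positive integers $i,j$, then $uv=z^{i+j}$. Since $z=z^1$ is a palindrome, Proposition~\ref{prop1} immediately yields that $z^{i+j}$ is a palindrome, and hence so is $uv$.

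For the ``only if'' direction, assume $u,v$ are nonempty palindromes and $uv$ is a palindrome. The first step is the observation $uv=(uv)^R=v^Ru^R=vu$, using $u=u^R$ and $v=v^R$. Thus $u$ and $v$ commute, so Lemma~\ref{lysch}(i) supplies a word $z$ and positive integers $i,j$ with $u=z^i$ and $v=z^j$. The remaining step is to verify that $z$ itself is a palindrome: since $z^i=u$ is a palindrome, Proposition~\ref{prop1} (applied with exponents $i$ and $1$) gives that $z=z^1$ is a palindrome, and this completes the argument.

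I do not expect a genuine obstacle here; the proposition is a direct corollary of Lemma~\ref{lysch}(i) and Proposition~\ref{prop1}. The only point worth a moment's care is that the word $z$ produced by Lemma~\ref{lysch}(i) need not be primitive, but this is harmless: Proposition~\ref{prop1} is stated for an arbitrary word, so one passes directly from ``$z^i$ is a palindrome'' to ``$z$ is a palindrome'' without having to extract a primitive root first.
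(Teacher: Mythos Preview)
Your proposal is correct and follows essentially the same route as the paper's own proof: both directions rely on the identity $uv=(uv)^R=v^Ru^R=vu$, an application of Lemma~\ref{lysch}(i), and Proposition~\ref{prop1} to handle palindromicity of powers. Your closing remark about $z$ not needing to be primitive is a helpful clarification, but the argument itself matches the paper line for line.
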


\begin{proof} 
Suppose $uv$ is a palindrome.  Then $uv = (uv)^R = v^R u^R = vu$. By Lemma~\ref{lysch} (i), $u = z^i$, $v = z^j$ for integers $i, j \geq 1$. By Proposition~\ref{prop1}, we have that $z$ is a palindrome. Conversely, if $u = z^i$ and $v = z^j$, we have $uv = z^{i+j}$, which is a palindrome by Proposition~\ref{prop1}.\qed
\end{proof}

\begin{proposition} \label{pro:sha}
The word $x$ is an even-length palindrome iff there exists a word $y$ such that $x = y \, \sha \, y^R$.
\end{proposition}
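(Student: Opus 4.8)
The plan is to prove both directions by direct manipulation of the array indices, exploiting the fact that the perfect shuffle simply interleaves the odd- and even-indexed positions of its output.

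For the ``if'' direction, suppose $x = y \sha y^R$ with $|y| = n$, so that $|x| = 2n$ is even. By the definition of $\sha$ we have $x[2i-1] = y[i]$ and $x[2i] = y^R[i] = y[n+1-i]$ for $1 \le i \le n$. I would then verify $x = x^R$ position by position: for an odd position $2j-1$, the mirror position $2n+1-(2j-1) = 2(n-j+1)$ is even, so $x[2n+1-(2j-1)] = y[\,n+1-(n-j+1)\,] = y[j] = x[2j-1]$; and symmetrically, for an even position $2j$ the mirror position $2n+1-2j = 2(n-j+1)-1$ is odd, giving $x[2n+1-2j] = y[n-j+1] = y[n+1-j] = x[2j]$. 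Hence $x$ is a palindrome of even length.

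For the ``only if'' direction, let $x$ be a palindrome with $|x| = 2n$, and define $y := x[1]\,x[3]\,x[5]\cdots x[2n-1]$, the subword formed by the letters of $x$ in odd positions, so that $y[i] = x[2i-1]$ for $1 \le i \le n$. I claim $x = y \sha y^R$. The odd positions of $y \sha y^R$ agree with those of $x$ by the definition of $y$. For the even positions, observe that since $x$ is a palindrome, $x[2i] = x[2n+1-2i]$, and $2n+1-2i = 2(n+1-i)-1$ is odd, so $x[2i] = y[n+1-i] = y^R[i]$, which is exactly the letter in position $2i$ of $y \sha y^R$. Therefore $x = y \sha y^R$.

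The argument is entirely elementary; the only point requiring care is the index bookkeeping that matches the mirror position $2n+1-k$ of $x$ against the correct letter of $y$ according to the parity of $k$. Conceptually, one may note that, writing $\mathrm{odd}(w)$ and $\mathrm{even}(w)$ for the subwords of $w$ in odd and even positions, every even-length word $w$ satisfies $w = \mathrm{odd}(w) \sha \mathrm{even}(w)$; thus ``$x = y \sha y^R$ for some $y$'' is equivalent to ``$\mathrm{even}(x) = \mathrm{odd}(x)^R$'', and the position-by-position check above shows this last identity is in turn equivalent to $x$ being an even-length palindrome.
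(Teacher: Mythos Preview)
Your proof is correct. The paper itself does not supply a proof of this proposition: it is listed among the ``basic properties of palindromes'' and left to the reader, so there is nothing to compare your argument against beyond noting that your direct index-by-index verification is exactly the kind of routine check the authors evidently had in mind.
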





\section{Conjugates of Palindromes}
\label{sectwo}

Here we study the distribution of palindromes in conjugacy classes and count conjugates of palindromes.

\begin{theorem} \label{the:conj}
A conjugacy class contains at most two palindromes. A conjugacy class has two palindromes if and only if it contains a word of the form $(xx^R)^i$, where $xx^R$ is a primitive word and $i\ge 1$.
\end{theorem}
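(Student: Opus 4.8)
The plan is to reduce the whole statement to a divisibility condition on cyclic rotations. Write $\rho^k$ for the rotation sending $w=w[1..n]$ to $w[k+1..n]\,w[1..k]$, so that the conjugates of $w$ are precisely $\rho^0(w),\dots,\rho^{n-1}(w)$, the rotations compose as $\rho^a\rho^b=\rho^{a+b}$, and $\rho^n=\mathrm{id}$. The crucial first step is the claim: if $p$ is a palindrome of length $n$, then $\rho^k(p)$ is a palindrome if and only if $\rho^{2k}(p)=p$. I would prove this by writing $\bigl(\rho^k(p)\bigr)^R$ out in array form and using $p[i]=p[n+1-i]$ to recognize it as $\rho^{n-k}(p)$; then $\rho^k(p)$ is a palindrome iff $\rho^k(p)=\rho^{n-k}(p)$, and applying $\rho^k$ to both sides (using $\rho^n=\mathrm{id}$) turns this into $\rho^{2k}(p)=p$, the reverse implication being equally immediate.

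Next I bring in the primitive root. Suppose a conjugacy class contains a palindrome $p$, and let $z$ be its primitive root, say $p=z^i$ with $|z|=m$. By Lemma~\ref{conjnum} the class has exactly $m$ elements, so among $\rho^0(p),\dots,\rho^{n-1}(p)$ the stabilizer of $p$ has size $n/m$ and, being a subgroup of a cyclic group of order $n$, equals $\{0,m,2m,\dots\}$; that is, $\rho^t(p)=p$ iff $m\mid t$. Combined with the first step, $\rho^k(p)$ is a palindrome exactly when $m\mid 2k$. Running $k$ over $0,1,\dots,m-1$ (which already lists every conjugate) leaves only $k=0$ when $m$ is odd, and exactly $k=0$ and $k=m/2$ when $m$ is even, these two being distinct since $0<m/2<m$. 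So a class containing a palindrome has exactly one palindrome when the primitive root length $m$ is odd and exactly two when $m$ is even; in particular no class has more than two palindromes.

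It remains to match the classes with two palindromes against those containing a word $(xx^R)^i$ with $xx^R$ primitive and $i\ge1$. If the class contains such a word, then $xx^R$ is a palindrome, so $(xx^R)^i$ is a palindrome (Proposition~\ref{prop1}) whose primitive root is $xx^R$, of even length $2|x|$; by the previous paragraph the class then has two palindromes. Conversely, if the class has two palindromes, it contains a palindrome $p=z^i$, and since there are two rather than one its primitive root $z$ has even length $m=2\ell$. From $p=p^R$ we get $z^i=(z^i)^R=(z^R)^i$, so comparing prefixes of length $m$ gives $z=z^R$; an even-length palindrome obeys $z[\ell+1..2\ell]=\bigl(z[1..\ell]\bigr)^R$, whence $z=xx^R$ with $x=z[1..\ell]$, and $p=(xx^R)^i$ with $xx^R=z$ primitive.

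I expect the only point needing real care to be the index bookkeeping of the first step --- verifying precisely that $\bigl(\rho^k(p)\bigr)^R=\rho^{n-k}(p)$ and handling the exponent arithmetic modulo $n$; everything afterwards is a short chain of implications. A convenient side effect of working through rotations is that the degenerate situation of two conjugates that happen to coincide never has to be treated separately.
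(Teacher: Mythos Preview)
Your proof is correct and takes a genuinely different route from the paper's. The paper proceeds by induction on the length of the word: given two distinct conjugate palindromes $uv$ and $vu$, it splits on whether $|u|=|v|$ (handled by the auxiliary Lemma~\ref{lem:xxR}) or $|u|<|v|$, in the latter case using Lemma~\ref{lysch}\,(ii) to extract a shorter pair of conjugate palindromes $st$, $ts$ and invoke the inductive hypothesis, then reassembles everything to exhibit $uv=(xx^R)^{2j(i+1)}$. Your argument instead dissolves the question into cyclic-group arithmetic via the identity $\bigl(\rho^k(p)\bigr)^R=\rho^{n-k}(p)$ for a palindrome $p$, turning ``how many conjugates of $p$ are palindromes'' into ``how many $k\in\{0,\dots,m-1\}$ satisfy $m\mid 2k$'', which is immediately $1$ or $2$ according to the parity of the primitive-root length~$m$. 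This is shorter and more conceptual, obtains Corollary~\ref{cor:012} as a direct by-product rather than a consequence, and needs no auxiliary lemma like Lemma~\ref{lem:xxR}; the paper's inductive proof, on the other hand, is more explicitly constructive in that it identifies the second palindrome as $(x^Rx)^i$ along the way. The index bookkeeping you flag is indeed routine: $\bigl(\rho^k(w)\bigr)^R=\rho^{n-k}(w^R)$ holds for arbitrary $w$, and specializing to $w=w^R$ gives exactly what you need.
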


\begin{lemma} \label{lem:xxR}
Suppose $u\ne u^R$ and $uu^R=z^i$ for a primitive word $z$. Then $i$ is odd and $z=xx^R$ for some $x$.
\end{lemma}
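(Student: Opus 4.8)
The plan is to reduce to the case that $z$ is itself a palindrome, then fix the parity of $i$, and finally recover the factorization $z = xx^R$ by matching two descriptions of the word $uu^R$.

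First I would observe that $uu^R$ is a palindrome, since $(uu^R)^R = (u^R)^R u^R = uu^R$. Hence $z^i$ is a palindrome, and Proposition~\ref{prop1} then gives that $z$ is a palindrome as well, i.e.\ $z = z^R$. Note also that $u \ne \varepsilon$ (otherwise $u = u^R$), so $i \ge 1$.

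Next I would rule out even values of $i$. If $i = 2k$, then $uu^R = (z^k)(z^k)$ with $|u| = |z^k|$; comparing the length-$|z^k|$ prefixes of both sides gives $u = z^k$, and comparing the length-$|z^k|$ suffixes gives $u^R = z^k$, so $u = u^R$ — a contradiction. Thus $i$ is odd, say $i = 2k+1$ with $k \ge 0$. Since $2|u| = i|z|$ and $i$ is odd, $|z|$ must be even; write $z = xy$ with $|x| = |y| = |z|/2$.

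Finally, using the factorization $z^{2k+1} = z^k(xy)z^k$ together with $|u| = k|z| + |x|$, the word $u$, being the length-$|u|$ prefix of $uu^R = z^{2k+1}$, must equal $z^k x$. Then $u^R = x^R(z^k)^R = x^R z^k$ because $z = z^R$, so $uu^R = z^k(xx^R)z^k$. Cancelling the common prefix $z^k x$ and then the common suffix $z^k$ from $z^k(xx^R)z^k = z^k(xy)z^k$ yields $x^R = y$, hence $z = xx^R$, as claimed. The only step that needs real care is this last one, where one must keep track of lengths to be sure the relevant prefixes and suffixes genuinely line up; everything else is immediate.
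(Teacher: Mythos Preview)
Your proof is correct and follows essentially the same two-step structure as the paper: rule out even $i$ by the obvious length argument, then split $z$ in half and identify the second half as the reversal of the first. The only difference is cosmetic: you first invoke Proposition~\ref{prop1} to get $z=z^R$ and then compute $u$ and $u^R$ explicitly, whereas the paper skips this and simply observes that the length-$|z|/2$ prefix $x$ of $z$ is a prefix of $u$ and the length-$|z|/2$ suffix $x'$ of $z$ is a suffix of $u^R$, so $x'=x^R$ immediately.
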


\begin{proof}
If $i$ is even, then $uu^R=(z^{i/2})^2$. Hence, $u=u^R$, contradicting the conditions of the lemma. So $i$ is odd and then $|z|$ is even. Let $z=xx'$, where $|x|=|x'|$. We see that $x$ is a prefix of $u$ and $x'$ is a suffix of $u^R$. Hence, $x'=x^R$, as required.\qed
\end{proof}

\begin{proof}[of Theorem~\ref{the:conj}]
Let us prove that for any conjugacy class with two distinct palindromes, say $uv$ and $vu$, there exists a word $x$ and a number $i$ such that $xx^R$ is primitive, $uv=(xx^R)^i$, and $vu=(x^Rx)^i$. We use induction on $n=|uv|$. The base case is trivial, because such conjugacy classes do not exist for, say, $n=1$.

For the inductive step, assume $|u| \leq |v|$ without loss of generality. If $|u|=|v|$, then $v=u^R$. By Lemma~\ref{lem:xxR} we get $uv=(xx^R)^i$, $vu=(x^Rx)^i$ for a primitive word $xx^R$ and $i\ge 1$.

Now let $|u|<|v|$. Then $v$ begins and ends with $u^R$. Applying Lemma~\ref{lysch} (ii), we obtain $u^R=(st)^is$, where $s\ne\varepsilon$ and $i\ge 0$ and, respectively, $v=(st)^{i+1}s$. Looking at the central factor of the palindromes 
\begin{align}
uv &= (s^Rt^R)^is^R\cdot st\cdot s(ts)^i, \\
vu &= (st)^is\cdot ts\cdot s^R(t^Rs^R)^i,
\end{align}
we see that $st$ and $ts$ are also palindromes. If $t=\varepsilon$, then $s$ is a palindrome, implying $uv=vu$, which is impossible. If $st=ts$, then by Lemma~\ref{lysch} (i) both $s$ and $t$ are powers of some $z$. By Proposition~\ref{pro:xn}, $s$, $t$, and $z$ are palindromes, and then again $uv=vu$. Therefore, we obtain $st\ne ts$. So we can apply the inductive hypothesis to these two palindromes, getting $st=(xx^R)^j$, $ts=(x^Rx)^j$ for a primitive word $xx^R$. Then we can write
\begin{equation} \label{one} 
v=(xx^R)^{j(i+1)}s=s(x^Rx)^{j(i+1)}.
\end{equation}
Since all conjugates of the word $xx^R$ are distinct words by Lemma~\ref{conjnum}, $x$'s and their reversals occur in the same positions in both representations \eqref{one}. Then $s=(xx^R)^kx$ for some $k$, $0\le k<j$. Then we can easily compute $t$, $s^R$, and $t^R$ to get $uv=(xx^R)^{2j(i+1)}$, $vu=(x^Rx)^{2j(i+1)}$. Thus, we have finished the inductive step.

Note that a conjugacy class of a palindrome $(xx^R)^i$, where $xx^R$ is primitive, clearly contains a different palindrome $(x^Rx)^i$. To finish the proof of the theorem it remains to show that such a class contains no other palindromes. Indeed, consider the class of $w=(xx^R)^i$. It consists of $i$th powers of conjugates of $xx^R$ (see Lemma~\ref{conjnum}). Let $u$ and $v$ be such that $uv=xx^R$ and $(vu)^i$ is a palindrome. Then $vu$ is a palindrome by Proposition~\ref{pro:xn}. If $|u|\ne |v|$, say, $|u|<|v|$, then we apply the above argument to $uv$ and $vu$, getting $uv=(yy^R)^{2k}$ for some $y$ and $k$. But this is impossible, because $uv=xx^R$ is primitive. Hence, $|u|=|v|$, and thus $vu=x^Rx$. Thus, the class of $w$ contains exactly two palindromes.\qed
\end{proof}

\begin{corollary} \label{cor:012}
A conjugacy class of a word $w=z^m$, where $z$ is primitive and $m\ge1$, contains\\
(i) 0 or 1 palindrome, if $|z|$ is odd;\\
(ii) 0 or 2 palindromes, if $|z|$ is even. 
\end{corollary}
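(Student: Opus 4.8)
The plan is to derive both parts directly from Theorem~\ref{the:conj}, using Lemma~\ref{conjnum} to control the length of the primitive root of each conjugate and Proposition~\ref{pro:xn} to move the palindrome property between a word and its primitive root. Write $K$ for the conjugacy class of $w=z^m$. By Lemma~\ref{conjnum}, $|K|=|z|$; moreover, if any word in $K$ is written as $p^k$ with $p$ primitive, then applying Lemma~\ref{conjnum} to that word gives $|K|=|p|$, so every such primitive root has length $|z|$. By Theorem~\ref{the:conj}, $K$ contains $0$, $1$, or $2$ palindromes, so in each part it remains only to rule out one case.

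For part (i), assume for contradiction that $|z|$ is odd but $K$ contains two palindromes. Then Theorem~\ref{the:conj} provides a word of the form $(xx^R)^i\in K$ with $xx^R$ primitive. By the observation above, $|z|=|K|=|xx^R|=2|x|$ is even, a contradiction. Hence $K$ contains $0$ or $1$ palindromes.

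For part (ii), assume $|z|$ is even and $K$ contains at least one palindrome $p$; I will produce a second palindrome via the ``if'' direction of Theorem~\ref{the:conj}. Write $p=(z')^k$ with $z'$ primitive and $k\ge1$; then $|z'|=|z|$ is even, and by Proposition~\ref{pro:xn} the word $z'$ is itself a palindrome. An even-length palindrome $z'$ of length $2\ell$ factors as $z'=ab$ with $|a|=|b|=\ell$, and the identity $z'=(z')^R=b^Ra^R$ forces $a=b^R$, i.e. $b=a^R$; thus $z'=aa^R$, which is primitive because $z'$ is. Consequently $p=(aa^R)^k\in K$ with $aa^R$ primitive, and Theorem~\ref{the:conj} yields a second palindrome in $K$. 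Together with the upper bound of two, this shows $K$ contains $0$ or $2$ palindromes.

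I do not anticipate a genuine obstacle here: the corollary is essentially a parity bookkeeping argument layered on Theorem~\ref{the:conj}. The only slightly delicate point, used in both parts, is that every conjugate of $z^m$ has primitive root of length exactly $|z|$; this is obtained for free by comparing conjugacy-class sizes via Lemma~\ref{conjnum}.
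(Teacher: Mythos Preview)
Your argument is correct and is precisely the kind of direct unpacking of Theorem~\ref{the:conj} that the paper leaves implicit (no separate proof is given for this corollary). The one nontrivial ingredient you needed---that every conjugate of $z^m$ has primitive root of length exactly $|z|$---is handled cleanly via Lemma~\ref{conjnum}, and the rest is the intended parity check together with the observation that an even-length primitive palindrome has the form $aa^R$.
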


Let $n=p_1^{i_1}\cdots p_k^{i_k}$, where $p_1,\ldots,p_k$ are primes. Recall that the M\"obius function $\mu(n)$ equals $(-1)^k$ if $i_1=\cdots =i_k=1$ (i.e., if $n$ is \emph{square-free}) and $0$ otherwise.  The following result apparently first appeared in \cite{Nes99}.

\begin{lemma}
The number $\rho(k,n)$ of $k$-ary words of length $n$ that are both (i) primitive and (ii) a palindrome satisfies the following formula
\begin{equation} \label{primpal}
\rho(k,n)=\sum_{d|n} \mu(d) k^{\lfloor ((n/d)+1)/2 \rfloor}.
\end{equation}
\end{lemma}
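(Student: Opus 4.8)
The plan is to count $k$-ary palindromes of length $n$ by a generating-function-free, direct sieve argument, exploiting the fact that a palindrome is determined by its first half. Let me write $P(k,n)$ for the total number of $k$-ary palindromes of length $n$. A palindrome of length $n$ is uniquely determined by its prefix of length $\lceil n/2\rceil = \lfloor (n+1)/2\rfloor$, and every such prefix extends to exactly one palindrome; hence $P(k,n) = k^{\lfloor (n+1)/2\rfloor}$. The first key step is therefore to record this closed form for $P(k,n)$, which requires no computation.

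The second step is to stratify palindromes by their primitive root. By Proposition~\ref{pro:xn}, if $w$ is a palindrome and $w = z^m$ with $z$ primitive, then $z$ is itself a palindrome; conversely every power of a palindrome is a palindrome. Thus every $k$-ary palindrome of length $n$ can be written uniquely as $z^{n/e}$ where $e\mid n$, $|z| = e$, and $z$ is a primitive palindrome of length $e$. This gives the partition identity
\begin{equation} \label{eq:partition}
k^{\lfloor (n+1)/2\rfloor} = P(k,n) = \sum_{e \mid n} \rho(k,e).
\end{equation}
The third step is to apply M\"obius inversion to \eqref{eq:partition}: since $P(k,n) = \sum_{e\mid n}\rho(k,e)$ for all $n$, we obtain
\begin{equation} \label{eq:mobinv}
\rho(k,n) = \sum_{d \mid n} \mu(d)\, P(k,n/d) = \sum_{d\mid n}\mu(d)\, k^{\lfloor ((n/d)+1)/2\rfloor},
\end{equation}
which is exactly \eqref{primpal}.

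The only real subtlety — and the step I expect to be the main obstacle to state cleanly — is the uniqueness of the decomposition $w = z^{n/e}$ with $z$ a primitive palindrome, i.e. that the two halves of identity \eqref{eq:partition} genuinely match term by term in the sieve. This rests on two facts: (a) the primitive root of any word is unique (standard), and (b) that primitive root of a palindrome is again a palindrome, which is precisely the content of Proposition~\ref{pro:xn} applied with the exponents $1$ and $m$. One must also check the trivial edge cases $n=1$ (where $\rho(k,1) = k = P(k,1)$, consistent with the sum over the single divisor $d=1$) and the parity bookkeeping in $\lfloor((n/d)+1)/2\rfloor$, but these are routine. With \eqref{eq:partition} and the uniqueness of primitive roots in hand, the M\"obius inversion is immediate and the formula follows.
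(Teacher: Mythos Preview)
Your argument is correct: the closed form $P(k,n)=k^{\lfloor (n+1)/2\rfloor}$, the stratification $P(k,n)=\sum_{e\mid n}\rho(k,e)$ via the unique primitive root (using Proposition~\ref{pro:xn} to ensure that root is again a palindrome), and then M\"obius inversion together yield \eqref{primpal} exactly as you wrote.

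Note, however, that the paper does not actually prove this lemma; it merely attributes the formula to \cite{Nes99} and states it without proof. So there is no ``paper's own proof'' to compare against. Your derivation is the standard and natural one, and is presumably what the cited reference does as well.
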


We use \eqref{primpal} and Corollary~\ref{cor:012} to count the conjugates of palindromes. 
\begin{theorem}
The number of conjugates of $k$-ary palindromes of length $n$ is 
\begin{equation} \label{conjpal}
c(k,n) = \sum_{d|n}f(d)\cdot \rho(k,d), \text{ where }f(d) = 
\begin{cases}
d,& \text{ if $d$ is odd};\\
d/2,& \text{if $d$ is even}.
\end{cases}
\end{equation}
\end{theorem}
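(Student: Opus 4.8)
The plan is to count by conjugacy classes. By definition $c(k,n)$ counts all $k$-ary words of length $n$ whose conjugacy class contains a palindrome, and since the conjugacy classes partition $\Sigma^n$, this number equals the sum of the cardinalities of those classes that contain at least one palindrome. First I would group these classes according to $d$, the length of the primitive root $z$ of any (hence every) word in the class; every word of length $n$ has such a root with $d\mid n$, and by Lemma~\ref{conjnum} the class has exactly $d$ words. Hence $c(k,n)=\sum_{d\mid n} d\cdot N(k,d)$, where $N(k,d)$ denotes the number of conjugacy classes whose primitive root has length $d$ and which contain a palindrome.

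Next I would pin down the palindromes of length $n$ whose primitive root has length $d$. By Proposition~\ref{pro:xn}, a word $z^{n/d}$ with $z$ primitive of length $d$ is a palindrome if and only if $z$ is a palindrome; together with the uniqueness of the primitive root this makes $z\mapsto z^{n/d}$ a bijection from the primitive palindromes of length $d$ onto the palindromes of length $n$ with primitive root of length $d$. In particular there are exactly $\rho(k,d)$ of the latter.

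Now I would feed this into Corollary~\ref{cor:012}. Fix $d\mid n$ and consider the conjugacy classes with primitive root of length $d$. If $d$ is odd, each such class contains at most one palindrome, so the palindromes counted above lie in $\rho(k,d)$ distinct classes and $N(k,d)=\rho(k,d)$. If $d$ is even, each such class contains either $0$ or $2$ palindromes, so the $\rho(k,d)$ palindromes split into $2$-element groups, one group per palindrome-containing class; thus $\rho(k,d)$ is even and $N(k,d)=\rho(k,d)/2$. Substituting into the displayed formula gives
\[
c(k,n)=\sum_{\substack{d\mid n\\ d\text{ odd}}} d\,\rho(k,d)\;+\;\sum_{\substack{d\mid n\\ d\text{ even}}} \frac{d}{2}\,\rho(k,d)\;=\;\sum_{d\mid n} f(d)\,\rho(k,d),
\]
which is the claimed identity.

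I do not expect a serious obstacle here: the statement is essentially a bookkeeping argument combining Lemma~\ref{conjnum}, Proposition~\ref{pro:xn} and Corollary~\ref{cor:012}. The one point that needs care is the even case, where one must use the sharper fact that a palindrome-containing class with even primitive-root length contains \emph{exactly two} palindromes rather than one; this is what produces the factor $f(d)=d/2$ and, as a byproduct, forces $\rho(k,d)$ to be even for even~$d$.
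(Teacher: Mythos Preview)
Your argument is correct and matches the paper's proof in substance: both group conjugacy classes by the length $d$ of the primitive root, use Proposition~\ref{pro:xn} to identify the $\rho(k,d)$ palindromes with root length $d$, and invoke Corollary~\ref{cor:012} to halve the contribution when $d$ is even. Your presentation is in fact slightly cleaner, since you split directly on the parity of $d$ rather than first on the parity of $n$, but the underlying reasoning is the same.
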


\begin{proof}
If $n$ is odd, it suffices to count the number of palindromes and multiply it by the number of distinct conjugates given in Lemma~\ref{conjnum}. Instead of non-primitive palindromes, we count their primitive roots, which are palindromes by  Proposition~\ref{pro:xn}. Thus we have $$ c(k,n) = \sum_{d|n} d\cdot \rho(k,d), $$ which is equivalent to \eqref{conjpal} because $d$ takes only odd values. If $n$ is even, some classes contain two palindromes. Let $w=z^i$, where $z$ is primitive. If $|z|$ is odd, then the class of $w$ contains 0 or 1 palindrome, while if $|z|$ is even, the class of $w$ contains 0 or 2 palindromes. In the latter case, we must divide the result of counting the conjugates of palindromes by $2$. This gives precisely \eqref{conjpal}.\qed 
\end{proof}

\section{Palindromic Pairs}
\label{secthree}

In this section we consider palindromic pairs, which are words factorizable into two palindromes. First we give a few easy characterizations of palindromic pairs, and then prove Theorem~\ref{enum} on the number of ``even'' and ``odd'' palindromic pairs.

Let $\pal$ be the set of palindromes over the alphabet $\Sigma$. Palindromic pairs are exactly the elements of $\pal^2$. Recall that $L_1/L_2 = \lbrace x \in \Sigma^* \mid \exists y \in L_2: xy\in L_1 \rbrace$.  

\begin{proposition}
$\pal/\pal = \pal^2$.
\label{thm1}
\end{proposition}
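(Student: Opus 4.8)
The plan is to establish the two inclusions separately; both are short, and the only nontrivial direction rests on Lemma~\ref{lysch}(ii) together with the fact that reversal preserves length.

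First I would prove $\pal^2 \subseteq \pal/\pal$. Given $x \in \pal^2$, write $x = pq$ with $p,q \in \pal$. Then $p$ is a palindrome and $xp = pqp$ satisfies $(pqp)^R = p^R q^R p^R = pqp$, so $xp \in \pal$; by the definition of the quotient this gives $x \in \pal/\pal$. This direction presents no difficulty.

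For the reverse inclusion $\pal/\pal \subseteq \pal^2$, take $x \in \pal/\pal$, so there is a palindrome $y$ with $xy \in \pal$. The cases $x = \varepsilon$ and $y = \varepsilon$ are immediate (in the first, $x = \varepsilon\varepsilon \in \pal^2$; in the second, $x = xy \in \pal \subseteq \pal^2$), so assume both words are nonempty. Since $xy$ and $y$ are palindromes, $xy = (xy)^R = y^R x^R = y x^R$. I would then apply Lemma~\ref{lysch}(ii) to the equation $x \cdot y = y \cdot x^R$ (all three words nonempty), obtaining words $s \ne \varepsilon$ and $t$ and an integer $i \ge 0$ with $x = st$, $x^R = ts$, and $y = (st)^i s$. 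Comparing $x^R = ts$ with $(st)^R = t^R s^R$ and matching the prefixes of length $|t|$ and the suffixes of length $|s|$ forces $t = t^R$ and $s = s^R$; hence $s,t \in \pal$ and $x = st \in \pal^2$.

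The argument is essentially mechanical, so I do not expect a genuine obstacle; if anything, the one step worth isolating is the observation that $xy = y x^R$ is exactly the conjugacy-type equation handled by Lemma~\ref{lysch}(ii), and that the factors it yields are automatically palindromes because $s$ and $s^R$ (resp.\ $t$ and $t^R$) occupy matching positions of equal length. The only care needed is in the boundary cases (empty $x$ or $y$) and in noting that the value $i = 0$, i.e.\ $y = s$, is permitted and causes no trouble.
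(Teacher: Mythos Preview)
Your proof is correct and follows essentially the same route as the paper's: both directions match, including the use of Lemma~\ref{lysch}(ii) on the equation $xy = yx^R$ and the length-matching argument to conclude that the two factors are palindromes. The only cosmetic difference is notation ($s,t$ versus the paper's $u,v$).
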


\begin{proof}
$\pal/\pal \subseteq \pal^2$:  Suppose $x \in \pal/\pal$.  Then there  exists a palindrome $y$ such that $xy$ is a palindrome.  If either $x$ or $y$ is empty, the result is clearly true.  Otherwise $xy = (xy)^R = y^R x^R = y x^R$. Then by Lemma~\ref{lysch} (ii) there exist $u \in \Sigma^+$, $v \in \Sigma^*$ and an integer $e \geq 0$ such that $x = uv$, $x^R = vu$, and $y = (uv)^i u$. From $x^R = vu$ we get $x = u^R v^R$.  But $x = uv$, so $u = u^R$ and $v = v^R$. So $u,v\in\pal$ and
then $x \in \pal^2$.

$\pal^2 \subseteq \pal/\pal$: If $x = uv$ with $u, v$ palindromes, then $uvu$ is a palindrome. Thus, taking $y = u$, we have $xy \in \pal$ and $y\in \pal$. Hence $x \in \pal/\pal$.
\qed \end{proof}

Call a word $x$ {\it credible} if it is a conjugate of its reverse $x^R$, like the English word {\tt referee}.

\begin{proposition}
The word $x$ is in $\pal^2$ if and only if it is credible.
\end{proposition}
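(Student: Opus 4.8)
The plan is to prove both implications directly from the definitions, relying only on the elementary cancellation fact that if $\alpha\beta=\gamma\delta$ and $|\beta|=|\delta|$, then $\alpha=\gamma$ and $\beta=\delta$.

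For the forward direction, suppose $x\in\pal^2$, say $x=uv$ with $u=u^R$ and $v=v^R$. Then $x^R=(uv)^R=v^Ru^R=vu$, so the pair of factorizations $x=uv$ and $x^R=vu$ exhibits $x$ as a conjugate of $x^R$; hence $x$ is credible. This step is immediate and needs nothing beyond the definition of reversal.

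For the converse, suppose $x$ is credible, and fix words $a,b$ with $x=ab$ and $x^R=ba$ witnessing that $x$ and $x^R$ are conjugates. Taking the reversal of the first equality gives $x^R=(ab)^R=b^Ra^R$, and comparing with $x^R=ba$ yields $ba=b^Ra^R$. Since $|b|=|b^R|$ (and therefore $|a|=|a^R|$), comparing prefixes of length $|b|$ forces $b=b^R$, and comparing suffixes of length $|a|$ forces $a=a^R$. Thus $a$ and $b$ are palindromes (possibly empty, which is harmless), and $x=ab\in\pal^2$.

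There is essentially no hard step: the only thing to spot is that one should feed the conjugacy decomposition $x=ab$, $x^R=ba$ back into itself via reversal, after which the length bookkeeping closes the argument. If one prefers to stay in the style of Proposition~\ref{thm1}, one could instead apply Lemma~\ref{lysch}(ii) to the equation $ba=b^Ra^R$, but the direct route above is shorter. Finally, combining this proposition with Proposition~\ref{thm1} gives the pleasant identity $\pal/\pal=\pal^2=\{\,x : x \text{ is a conjugate of } x^R\,\}$.
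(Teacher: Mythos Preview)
Your proof is correct and follows essentially the same approach as the paper: both directions use the same factorization $x=uv$ (you write $a,b$), compute $x^R$ two ways, and match prefixes/suffixes by length to conclude $u,v$ are palindromes. You are simply a bit more explicit than the paper about the length-cancellation step; otherwise the arguments coincide.
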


\begin{proof}
Suppose $x \in \pal^2$, that is, that $x = uv$ where $u, v$ are palindromes. Then $x^R = v^R u^R = vu$, so $x$ is a conjugate of $x^R$.

Otherwise, assume $x$ is a conjugate of $x^R$.  Then there exist $u, v$ such that $x = uv$ and $x^R = vu$.    But $x^R = v^R u^R$, so $v = v^R$ and $u = u^R$, and $x$ is the product of two palindromes.\qed
\end{proof}

\begin{remark}
The growth function for $\pal^2$ was studied by Kemp \cite{Kemp82}, who
computed a precise formula and described its asymptotics. For the
binary alphabet, see sequence A007055 in the {\it On-Line Encyclopedia
of Integer Sequences} \cite{Slo}.
\end{remark}

A factorization of a palindromic pair $w$ is a pair of palindromes $u,v$ such that $uv=w$ and $v\ne\varepsilon$. The number of factorizations of a palindromic pair is described in the following theorem, also due to Kemp \cite{Kemp82}.

\begin{theorem} \label{the:pp1}
A palindromic pair $w$ has $m$ factorizations if and only if $w=z^m$ for a primitive word $z$.
\end{theorem}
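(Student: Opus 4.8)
The plan is to prove the two directions separately, using the conjugacy characterization of palindromic pairs established just above (``$w\in\pal^2$ iff $w$ is credible'') together with Lemma~\ref{lysch}\,(ii), which is the natural tool for handling the overlap equation forced by credibility. For the easy direction, suppose $w=z^m$ with $z$ primitive. First I would observe that $w$ itself is a palindromic pair: since $w=z^m$ is credible whenever $z^m$ is credible, and in fact I will want to show $w\in\pal^2$, so it is cleanest to argue that a factorization $z=u_0v_0$ into palindromes exists and yields $w=z^m$ credible, hence in $\pal^2$; but actually the hypothesis is that $w$ \emph{is} a palindromic pair, so $w$ is credible, so $z$ is credible (a power of a credible word inherits credibility, or conversely one extracts it), hence $z\in\pal^2$, say $z=uv$ with $u,v$ palindromes and $v\ne\varepsilon$. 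Then for each $k$ with $0\le k<m$ the pair $(z^k u,\, v z^{m-1-k})$ — wait, these need not be palindromes; instead use that $z^k u$ and $v z^{m-k-1}$ each split as products, so a better bookkeeping is needed. Concretely, write $z=uv$ with $u=u^R$, $v=v^R$; then $z^m = u\,(vu)^{m-1} v$, and since $vu$ is also a conjugate of its reverse $(vu)^R=u^Rv^R=uv$... this is getting complicated, so the clean statement is: the $m$ factorizations are $(u,\,v z^{m-1})$ and its ``rotations''. Let me instead just assert the count: I expect exactly the factorizations $w=(z^j u)(v z^{m-1-j})$ for $0\le j\le m-1$ after checking $z^j u$ and $v z^{m-1-j}$ are palindromes — this holds because $u,v$ palindromes and $z=uv$ force $(z^j u)^R = u^R (z^R)^j = u (vu)^j = (uv)^j u = z^j u$, and similarly for the suffix. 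That gives $\ge m$ factorizations; distinctness follows since they have lengths $|u|+j|z|$, all different.

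For the harder direction, assume $w$ has $m$ factorizations $w=u_i v_i$, $i=1,\dots,m$, with all $u_i,v_i$ palindromes, $v_i\ne\varepsilon$, ordered by increasing $|u_i|$ (so $u_1$ is the shortest, possibly $\varepsilon$). The key step is to show the factorizations are ``equally spaced'': I would take two consecutive factorizations and apply Lemma~\ref{lysch}, or better, use Proposition~\ref{pro:uvuv} and the conjugacy picture. The cleanest route: from any factorization $w=uv$ with $u,v$ palindromes we get $w^R = vu$, so $w$ and $w^R$ are conjugate; moreover $w = u v$ and also $w = u (v)$, and considering $ww^R = (uv)(vu) = u v^2 u$ shows $ww^R$ is a palindrome with $w$ as a prefix, so by Lemma~\ref{lysch}\,(ii) applied to $w\cdot w^R = w^R \cdot w$... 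Actually the standard argument: let $z$ be the primitive root of $w$, so $w=z^t$; Theorem~\ref{the:conj} / Corollary~\ref{cor:012} controls palindromes in the conjugacy class. I would argue that every factorization $w=uv$ corresponds to writing $z$ (up to rotation) and that the number of palindromic conjugates times the period structure forces exactly $t$ factorizations. More precisely: if $w=uv$ with $u,v$ nonempty palindromes then $vu$ is a palindrome conjugate of $w$; by Theorem~\ref{the:conj} the conjugacy class of $w$ has at most two palindromes, and a careful analysis (the case $|u|=|v|$ versus $|u|\ne|v|$, as in the proof of Theorem~\ref{the:conj}) pins down that $u$ must be of the form $z^k x$ where $z = x x^R$-type, giving exactly $|z|/|x|$-many... and since the factorizations are counted by the starting positions of palindromic rotations, one gets exactly $m = t$ where $w = z^m$.

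The main obstacle I anticipate is the bookkeeping in the hard direction: translating ``$m$ palindromic factorizations'' into a clean periodicity statement $w=z^m$. The subtlety is that a factorization $w=uv$ need not itself split $z$ nicely unless $|u|,|v|$ are multiples of $|z|$ plus a fixed offset; the factorization $w=\varepsilon\cdot w$ is always available, and the other extreme when $u=w^R$-related forces $w=w^R$, i.e. $w$ itself a palindrome, which is the $z$-is-a-palindrome edge case that must be handled by Proposition~\ref{pro:xn} and Corollary~\ref{cor:012}\,(i) (odd $|z|$, one palindrome in the class) versus (ii). I expect the argument goes: each factorization $w=u_iv_i$ yields a palindromic conjugate $v_iu_i$; mapping factorization $i$ to the cyclic shift by $|u_i|$, consecutive shifts differ by $|u_{i+1}|-|u_i|$, and using Lemma~\ref{lysch}\,(ii) on the overlap $u_{i+1} = u_i \cdot(\text{something})$ one shows all gaps are equal to $|w|/m$, which is then a period of $w$; primitivity of $z$ and $w=z^{|w|/|z|}$ then force $|w|/m=|z|$, i.e. $m=|w|/|z|$, so $w=z^m$. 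I would write the equal-gaps claim as the central lemma and devote the bulk of the proof to it, citing Kemp \cite{Kemp82} for the statement's origin but giving a self-contained combinatorial argument.
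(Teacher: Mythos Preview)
The paper does not actually prove Theorem~\ref{the:pp1}; it attributes the result to Kemp \cite{Kemp82} and states it without proof. The closest thing to a ``paper's own proof'' is the proof of the antipalindromic analogue, Theorem~\ref{the:app}, which transfers verbatim to the palindromic setting. With that caveat, here is how your proposal compares.

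Your easy direction is fine once it settles down. The observation that $z$ is credible (since conjugates of $z^m$ are $m$th powers of conjugates of $z$, and $z^R$ is primitive) gives $z=uv$ with $u,v$ palindromes, and the computation $(z^ju)^R=u(vu)^j=(uv)^ju=z^ju$ is correct; likewise for $vz^{m-1-j}$. This yields $m$ distinct factorizations.

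Your hard direction, however, contains a genuine error that undermines the plan. You write ``if $w=uv$ with $u,v$ nonempty palindromes then $vu$ is a palindrome conjugate of $w$'' and then invoke Theorem~\ref{the:conj}. But $(vu)^R=u^Rv^R=uv=w$, so $vu=w^R$; thus $vu$ is a palindrome only when $w$ itself is. In particular \emph{every} factorization of $w$ produces the \emph{same} conjugate $w^R$, so counting palindromic conjugates via Theorem~\ref{the:conj} tells you nothing about the number of factorizations. The ``equal gaps'' sketch at the end is closer to a workable idea, but as written it still rests on the faulty premise about palindromic conjugates, and the crucial step (that the gaps are all equal to $|z|$) is asserted, not argued.

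The clean route---and the one the paper carries out for Theorem~\ref{the:app}---is purely via Lemma~\ref{lysch}\,(ii) and induction. Given two factorizations $w=u_1v_1=u_2v_2$ with $|u_1|<|u_2|$, the palindrome $u_1$ is both a prefix and a suffix of the palindrome $u_2$, so $u_1=(xy)^ix$, $u_2=(xy)^{i+1}x$ with $x,y$ palindromes; symmetrically for $v_2$ inside $v_1$. Matching the two descriptions forces (by induction on $|w|$) that $w=(xy)^t$ for some $t\ge2$, so a primitive palindromic pair has a unique factorization. Then for $w=z^m$ one checks that every factorization must split some copy of $z$ at its unique factorization point, giving exactly $m$.
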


A palindromic pair $w$ is even (resp., odd) if it can be factorized into two palindromes of even (resp., odd) length. Thus, an even-length palindromic pair is either even, or odd, or both, like the word 
\begin{equation} \label{aabaab}
aabaab=aa\cdot baab=aabaa\cdot b\,.
\end{equation} 
Note that in view of Theorem~\ref{the:pp1}, the last option applies to non-primitive words only. Let $\E(n,k)$ (resp., $\O(n,k)$) denote the number of even (resp., odd) $k$-ary palindromic pairs of length $n$.

\begin{theorem} \label{enum}
For all $n$ and $k$ we have $\O(n,k)=k\cdot\E(n,k)$.
\end{theorem}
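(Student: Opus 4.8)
The plan is to translate the statement into a count of conjugacy classes. If $n$ is odd then $\O(n,k)=\E(n,k)=0$, since a product of two palindromes of odd total length has factors of opposite parity; so assume $n$ is even throughout.

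\emph{Reformulation.} Write every even-length palindrome as $\alpha\alpha^R$ and every odd-length palindrome as $\alpha c\,\alpha^R$ with $|c|=1$. Then $(\alpha\alpha^R)(\beta\beta^R)$ is a conjugate of the palindrome $\alpha^R\beta\beta^R\alpha$, and $(\alpha a\,\alpha^R)(\beta b\,\beta^R)$ is a conjugate of $a\cdot(\alpha^R\beta b\,\beta^R\alpha)$, whose suffix of length $n-1$ is a palindrome. Moreover, passing from $w$ to $w[2..n]\,w[1]$ preserves the properties ``is an even palindromic pair'' and ``is an odd palindromic pair'' (a short check, splitting off a leading letter, using $c^R=c$); since $w\mapsto w[2..n]\,w[1]$ is a bijection of the finite set $\Sigma^n$, both properties are invariant under conjugation. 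Finally, a length-$n$ palindrome $q$ equals $\varepsilon\cdot q\in\pal^2$, and $cp=c\cdot p$ is a product of two odd palindromes whenever $|c|=1$ and $p$ is a palindrome of length $n-1$. Combining these, for even $n$: $\E(n,k)$ is the number of length-$n$ words conjugate to a palindrome, so $\E(n,k)=c(k,n)$ by~\eqref{conjpal}; and $\O(n,k)$ is the number of length-$n$ words conjugate to some word of $S_n:=\{\,w\in\Sigma^n : w[2..n]\text{ is a palindrome}\,\}$. So it suffices to show the number of conjugates of $S_n$ equals $k\cdot c(k,n)$; here $|S_n|=k\cdot k^{n/2}$.

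\emph{Counting conjugates of $S_n$.} The key lemma is that membership in $S_\ell$ depends only on the primitive root: \emph{for all $z$ and all $m\ge1$, $z^m$ has a palindromic suffix of length $m|z|-1$ iff $z$ has a palindromic suffix of length $|z|-1$}. (The forward direction is immediate; for the converse, comparing the first $|z|$ letters of both sides of the equation stating that $z[2..|z|]\cdot z^{m-1}$ equals its own reversal forces $z[2..|z|]$ to be a palindrome.) Hence, letting $s(\ell)$ be the number of primitive length-$\ell$ words in $S_\ell$, we get $\sum_{e\mid\ell}s(e)=|S_\ell|=k^{\lfloor\ell/2\rfloor+1}$, so $s(\ell)=\sum_{d\mid\ell}\mu(d)\,k^{\lfloor\ell/(2d)\rfloor+1}$ by M\"obius inversion. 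Next, for primitive $z$ of length $\ell$, a conjugate of $z^{n/\ell}$ lies in $S_n$ iff $z^{n/\ell}$ is an odd palindromic pair; analyzing (via Lemma~\ref{conjnum}) which conjugates of $z^{n/\ell}$ equal its reversal shows this happens iff $\ell$ is odd and $z\in\pal^2$, or $\ell$ is even and the unique palindromic factorization of $z$ has odd-length factors. Counting such $z$ ($\ell\,\rho(k,\ell)$ of the first kind, $\tfrac{\ell}{2}s(\ell)$ of the second) gives
\[
\O(n,k)=\sum_{\substack{\ell\mid n\\ \ell\ \mathrm{odd}}}\ell\,\rho(k,\ell)\ +\ \sum_{\substack{\ell\mid n\\ \ell\ \mathrm{even}}}\tfrac{\ell}{2}\,s(\ell),
\qquad\text{whereas}\qquad
c(k,n)=\sum_{\substack{\ell\mid n\\ \ell\ \mathrm{odd}}}\ell\,\rho(k,\ell)+\sum_{\substack{\ell\mid n\\ \ell\ \mathrm{even}}}\tfrac{\ell}{2}\,\rho(k,\ell).
\]

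\emph{Conclusion.} Using the formula just obtained for $s(\ell)$, the formula for $\rho(k,\ell)$ from~\eqref{primpal}, and $\mu(2^a)=0$ for $a\ge2$, a short computation shows $s(\ell)-k\,\rho(k,\ell)$ equals $(k-1)\,\rho(k,\ell/2)$ when $\ell\equiv2\pmod 4$ and equals $0$ when $4\mid\ell$. Substituting into the two displayed sums collapses $\O(n,k)-k\,c(k,n)$ to $(k-1)\bigl(\sum_{m\ \mathrm{odd},\ 2m\mid n}m\,\rho(k,m)-\sum_{\ell\ \mathrm{odd},\ \ell\mid n}\ell\,\rho(k,\ell)\bigr)$; this is $0$ since, for even $n$, the two index sets are the same — each is precisely the set of divisors of the odd part of $n$. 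Therefore $\O(n,k)=k\,c(k,n)=k\,\E(n,k)$. I expect this last arithmetic, in particular the evaluation of $s(\ell)-k\,\rho(k,\ell)$, to be the main obstacle; the reformulation and the power-stability lemma for $S_\ell$ are routine.
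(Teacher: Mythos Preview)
Your proof is correct and takes a genuinely different route from the paper's.

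\textbf{What the paper does.} The paper introduces the notion of an \emph{even-primitive} word, proves that such a word has at most one factorization of each parity (Lemma~\ref{evenrepr}) and that the parity of a factorization is stable under taking powers with even-primitive base (Lemma~\ref{evenpair}). This reduces the theorem to $\O'(n)=k\cdot\E'(n)$ for even-primitive words, which is then proved by an inclusion--exclusion identity \eqref{EO2} and a short induction, the key input being the elementary count $\#P_i(n)=k^{n/2}$ or $k^{n/2+1}$ depending on the parity of $i$.

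\textbf{What you do.} You show that ``even palindromic pair'' and ``odd palindromic pair'' are conjugacy-invariant, and then translate the problem entirely into conjugacy language: $\E(n,k)$ becomes exactly the quantity $c(k,n)$ already computed in Section~\ref{sectwo}, while $\O(n,k)$ becomes the number of conjugates of the set $S_n=\{w:w[2..n]\in\pal\}$. You then derive a M\"obius formula for the primitive part $s(\ell)$ of $|S_\ell|$ and finish with a direct arithmetic comparison of the two divisor sums.

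\textbf{Trade-offs.} Your approach has the pleasant side effect of linking Theorem~\ref{enum} back to Theorem~\ref{the:conj}/Corollary~\ref{cor:012} and of producing explicit closed forms for both $\E(n,k)$ and $\O(n,k)$; the paper's approach never needs M\"obius inversion and stays at the level of a single inductive identity. One spot in your write-up is thin: the count $\tfrac{\ell}{2}\,s(\ell)$ for even $\ell$ rests on the fact that each primitive odd-palindromic-pair conjugacy class of even length $\ell$ meets $S_\ell$ in exactly two elements. Your parenthetical (``which conjugates of $z^{n/\ell}$ equal its reversal'') does not make the argument explicit. A clean way to fill it in: note that $w\in S_\ell$ iff $\sigma(w)=w^R$, where $\sigma$ is the cyclic shift by one; identifying the conjugacy class with $\mathbb{Z}/\ell\mathbb{Z}$ and writing $z^R=\sigma^{j}(z)$, the condition becomes $2i\equiv j-1\pmod{\ell}$, which for even $\ell$ has exactly two solutions precisely when $j$ is odd --- and $j=|u|$ is odd exactly in the odd-pal-pair case by Theorem~\ref{the:pp1}. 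With this made explicit, your argument is complete.
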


The proof is based on several lemmas. The following lemma is a well-known corollary of the Fine-Wilf property \cite{FiWi65}:

\begin{lemma} \label{periods}
Let $p$ be the minimal period of a word $w$, $q$ be a period of $w$, and $p,q\le |w|/2$. Then $q$ is a multiple of $p$. 
\end{lemma}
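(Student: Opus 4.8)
The plan is to derive this directly from the Fine--Wilf theorem \cite{FiWi65}, which asserts that if a word has two periods $p$ and $q$ and its length is at least $p+q-\gcd(p,q)$, then it also admits $\gcd(p,q)$ as a period. The whole argument then reduces to verifying that the hypothesis of Fine--Wilf is satisfied under our assumptions and to exploiting the minimality of $p$.

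First I would set $g=\gcd(p,q)$ and observe that the bounds $p\le|w|/2$ and $q\le|w|/2$ give $p+q\le|w|$. Since $g\ge1$, this yields $p+q-g\le|w|-1<|w|$, so in particular $|w|\ge p+q-g$, which is exactly the length condition required by Fine--Wilf. Applying the theorem to the two periods $p$ and $q$ of $w$, I conclude that $g$ is also a period of $w$.

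Finally I would invoke the minimality of $p$. Since $p$ is the minimal period of $w$ and $g$ has just been shown to be a period of $w$, we must have $p\le g$. On the other hand $g=\gcd(p,q)$ divides $p$, so $g\le p$. Combining the two inequalities gives $g=p$, i.e.\ $\gcd(p,q)=p$, which is precisely the statement that $p$ divides $q$; hence $q$ is a multiple of $p$, as claimed.

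There is essentially no deep obstacle here: the only nontrivial ingredient is the Fine--Wilf theorem itself, which I am treating as given. The single point requiring care is the verification that $p+q\le|w|$ really does guarantee the Fine--Wilf length bound $|w|\ge p+q-\gcd(p,q)$; this is where the hypothesis $p,q\le|w|/2$ is used in full, and it is what makes the conclusion follow at once, with no additional case analysis.
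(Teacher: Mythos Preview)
Your proof is correct and matches the paper's approach exactly: the paper does not spell out a proof but simply presents the lemma as a well-known corollary of the Fine--Wilf theorem \cite{FiWi65}, and your argument is precisely the standard derivation from that theorem.
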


We call an even-length word $w$ \emph{even-primitive} if it is not an integer power of a shorter even-length word. The difference between primitive and even-primitive words is clarified in the following lemma.

\begin{lemma} \label{evenprim}
An even-length non-primitive word is even-primitive iff it is the square of an odd-length primitive word.
\end{lemma}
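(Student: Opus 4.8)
The plan is to begin from the data already available: since $w$ has even length and is non-primitive, write $w=z^m$ for a primitive word $z$ and an integer $m\ge 2$, so that $|w|=m|z|$ is even. The whole argument then reduces to a case analysis on the parity of $|z|$, together with one short Fine--Wilf observation.

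For the forward direction, assume $w$ is even-primitive. If $|z|$ were even, then $w=z^m$ would display $w$ as an integer power of the strictly shorter even-length word $z$, contradicting even-primitivity; hence $|z|$ is odd, and therefore $m$ must be even, say $m=2\ell$. Then $w=(z^2)^\ell$ with $z^2$ of even length, so $\ell\ge 2$ would again contradict even-primitivity. Thus $\ell=1$ and $w=z^2$ with $z$ odd-length and primitive, as required.

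For the converse, suppose $w=s^2$ with $s$ odd-length and primitive, and assume toward a contradiction that $w=u^k$ with $|u|$ even and $k\ge 2$. Then $|s|$ and $|u|$ are both periods of $w$ not exceeding $|w|/2=|s|$. The one nontrivial ingredient is that the minimal period of $s^2$ equals $|s|$ whenever $s$ is primitive: if $s^2$ had a period $p<|s|$, then by Lemma~\ref{periods} its minimal period $p_0$ would divide $|s|$ with $p_0<|s|$, forcing $s=(s[1..p_0])^{|s|/p_0}$, contrary to primitivity of $s$. Granting this, Lemma~\ref{periods} applied to the periods $|u|$ and $|s|$ gives $|s|\mid |u|$; since $1\le |u|\le |s|$, this yields $|u|=|s|$, which is impossible because $|u|$ is even and $|s|$ is odd. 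Hence $w$ is even-primitive.

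The only step requiring genuine care is the sublemma that a primitive $s$ is the minimal period of $s^2$; everything else is parity bookkeeping plus the Fine--Wilf corollary in Lemma~\ref{periods}. I expect that sublemma to be the main obstacle, although it is itself a one-line consequence of Lemma~\ref{periods}.
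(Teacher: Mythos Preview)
Your proof is correct and follows essentially the same route as the paper: both start from the primitive decomposition $w=z^m$, dispose of the forward direction by the same parity argument (odd $|z|$ forces even $m$, and $m\ge 4$ would give $w=(z^2)^{m/2}$), and handle the converse via the Fine--Wilf corollary (Lemma~\ref{periods}) by showing that $|z|$ is the minimal period of $w=z^2$. The paper simply records this minimal-period fact up front as part of the uniqueness of $(z,m)$ and then says sufficiency ``stems from'' it, whereas you spell out the contradiction with a hypothetical even-length root $u$ explicitly; the substance is the same.
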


\begin{proof} Let $w$ be an even-length non-primitive word. Then $w=z^m$, for a primitive word $z$ and an integer $m>1$. Such a pair $(z,m)$ is unique. Indeed, $|z|$ is a period of $w$, $|z|\le |w|/2$, and $|z|$ is not a multiple of a shorter period of $w$ due to primitivity. Then $|z|$ is the minimal period of $w$ by Lemma~\ref{periods}.

\emph{Necessity}. Since $w$ is even-primitive, $|z|$ is odd. Then $m$ is even, because $|w|=m|z|$. If $m\ge4$, then $w=(z^2)^{m/2}$ is not even-primitive. Hence, $m=2$, as desired.

\emph{Sufficiency} stems from the fact that $|z|$ is the minimal period of $w$.\qed
\end{proof}

A palindromic pair can have several factorizations according to Theorem~\ref{the:pp1}. However, the even-primitive words have the following useful property (cf.~\eqref{aabaab}).

\begin{lemma} \label{evenrepr}
An even-length even-primitive word has at most one factorization into two even-length palindromes and at most one factorization into two odd-length palindromes.
\end{lemma}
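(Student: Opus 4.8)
The plan is to argue by contradiction, handling the two claims separately but by the same mechanism: two distinct factorizations of the stated parity force a nontrivial period on $w$, which combines with $|w|$ being even to exhibit $w$ as a power of a shorter even-length word, contradicting even-primitivity. Suppose first that $w$ has two factorizations into even-length palindromes, say $w = u_1v_1 = u_2v_2$ with all four factors even-length palindromes and, without loss of generality, $0 < |u_1| < |u_2|$. Then $u_2$ has $u_1$ as a proper prefix, and $v_1$ has $v_2$ as a proper suffix; since $u_1$ and $v_1$ are palindromes, reading $u_1$ from the right and $v_1$ from the left gives a short overlap argument. Concretely, $u_2 = u_1 s$ and $v_1 = s v_2$ for some nonempty $s$, so $w$ has period $|u_1| + |v_2| < |w|$ (the two occurrences $u_1v_1$ and $u_2v_2$ of the same word align a copy of $w$ with a shift of $|u_2| - |u_1| = |s|$, hmm — I should instead extract the period directly).

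The cleanest route is via Theorem~\ref{the:pp1}: a palindromic pair with $m$ factorizations is exactly $z^m$ for a primitive $z$. So if $w$ has two distinct factorizations of \emph{any} kind, then $m \ge 2$ and $w = z^m$ with $z$ primitive. Now I invoke Lemma~\ref{evenprim}: since $w$ is even-length, non-primitive (it has $\ge 2$ factorizations, so $m\ge 2$), and even-primitive, it must be the square of an odd-length primitive word, i.e. $w = z^2$ with $|z|$ odd, and moreover $m = 2$ exactly. So $w$ has precisely two factorizations total, and I must show they cannot both be into even-length palindromes, nor both into odd-length palindromes. The two factorizations of $z^2$ are determined by the two factorizations of $z$ as a conjugate product: writing $z = pq$ with $p,q$ palindromes (such a split exists, possibly with $q=\varepsilon$, whenever $z$ itself is a palindromic pair), the factorizations of $w=z^2$ are $w = p\cdot q z = (z p)\cdot q$, whose first parts have lengths $|p|$ and $|z|+|p| = |p| + |q| + |p|$. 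Since $|z| = |p|+|q|$ is odd, $|p|$ and $|q|$ have opposite parities, so $|p|$ and $|z| + |p|$ have opposite parities as well — one factorization starts with an odd-length palindrome and the other with an even-length palindrome. Hence $w$ cannot have two factorizations of the same parity, proving both halves of the lemma.

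I expect the main obstacle to be the bookkeeping around which factorizations $z^2$ actually has and how their lengths relate to a fixed factorization $z = pq$ of $z$: one must be careful that the full list of factorizations of $w = z^2$ (there are exactly two, by Theorem~\ref{the:pp1}) really does correspond to the "cut inside the first $z$" and the "cut inside the second $z$" for the \emph{same} split $z = pq$, rather than to two unrelated splits of $z$. This follows because $z$ primitive and non-palindrome-or-not: if $z$ has two different palindromic splits it would itself have $\ge 2$ factorizations, forcing $z$ non-primitive, a contradiction; so $z$ has at most one factorization $pq$, and the two factorizations of $z^2$ are exactly $p\cdot(qz)$ and $(zp)\cdot q$ as claimed. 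Once this is pinned down, the parity comparison $|p| \not\equiv |z|+|p| \pmod 2$ is immediate from $|z|$ odd, and the lemma follows.
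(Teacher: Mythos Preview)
Your argument is essentially correct and takes a genuinely different route from the paper. The paper argues directly: from two same-parity factorizations $u_1v_1 = u_2v_2$ with $|v_1| < |v_2|$, it observes that $v_1$ is both a suffix and (being a palindrome) a prefix of the palindrome $v_2$, applies Lemma~\ref{lysch}(ii) to get $v_1 = (xy)^s x$, $v_2 = (xy)^{s+1}x$ with $x,y$ palindromes, then uses the analogous relation $u_1 = u_2xy = yxu_2$ and Lemma~\ref{lysch}(ii) once more to conclude $w = (yx)^t$ with $t \ge 2$; since $|yx| = |v_2| - |v_1|$ is even (a difference of equal-parity lengths), $w$ is not even-primitive. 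This is entirely self-contained, using only Lyndon--Sch\"utzenberger. Your route instead invokes the heavier Theorem~\ref{the:pp1} together with Lemma~\ref{evenprim} to reduce to $w = z^2$ with $|z|$ odd and $z$ primitive, and then shows that the only two factorizations of $z^2$ have first parts of lengths $|p|$ and $|p|+|z|$, hence of opposite parity. Your approach is conceptually clean and yields the slightly sharper conclusion that an even-primitive word with two factorizations has exactly one of each parity; the paper's approach is more elementary and does not rely on Kemp's result.

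One point should be made explicit: you assume $z$ admits a palindromic split $z = pq$, but only argue that $z$ has \emph{at most} one such split. Existence follows because any factorization $z^2 = uv$ induces one of $z$: if $|u| \le |z|$, write $z = ut$, so $v = tut$ and $v = v^R$ forces $t = t^R$; if $|u| > |z|$, write $u = zu'$ with $z = u'v$, so $u = u'vu'$ and $u = u^R$ forces $u' = u'^{R}$. (Also, by the paper's convention it is the second factor that must be nonempty, so $p$ may be $\varepsilon$, not $q$.) With this in hand, your two exhibited factorizations $p\cdot qpq$ and $pqp\cdot q$ are valid and distinct, and Theorem~\ref{the:pp1} guarantees there are no others, so the parity comparison finishes the proof.
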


\begin{proof}
Let $u_1,u_2,v_1,v_2$ be even-length palindromes such that $u_1v_1=u_2v_2=w$ and $|v_1|<|v_2|$ (recall that $v_1\ne\varepsilon$ by the definition of factorization). Then $v_1$ is a suffix of $v_2$ and, as both these words are palindromes, a prefix of $v_2$ as well. By Lemma~\ref{lysch} (ii), $v_1=(xy)^sx$, $v_2=(xy)^{s+1}x$ for some words $x\ne\lambda$ and $y$, and some integer $s\ge0$; in addition, it is clear that both $x$ and $y$ are palindromes. Hence, $u_1=u_2xy$, and, since $u_1,u_2,x,y$ are palindromes, $u_1=yxu_2$. Applying Lemma~\ref{lysch} (ii) again, we finally get $w=(yx)^t$ for some $t\ge2$. Since $|yx|=|v_2|-|v_1|$ is even, $w$ is not even-primitive.

The same argument works for odd-length palindromes $u_1,u_2,v_1,v_2$.\qed
\end{proof}

\begin{lemma} \label{evenpair}
Let $z$ be an even-primitive word, $m>1$ be an integer. Then $z^m$ is an even (resp., odd) palindromic pair iff $z$ is.
\end{lemma}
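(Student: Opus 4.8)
The plan is to prove the two directions of the biconditional separately, handling the harder direction by splitting on the dichotomy for even-primitive words supplied by Lemma~\ref{evenprim}: such a $z$ is either primitive (necessarily of even length), or $z=p^2$ for a primitive word $p$ of odd length.

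The ``if'' direction will be direct. Suppose $z$ is an even palindromic pair, say $z=uv$ with $u,v$ palindromes of even length (the odd case is identical; note that $|u|$ and $|v|$ always have the same parity, since $|z|$ is even). Using the identity $(vu)^{m-1}v=v(uv)^{m-1}$ one checks that $(vu)^{m-1}v$ is a palindrome, so $z^m=u\cdot(vu)^{m-1}v$ is a factorization of $z^m$ into two palindromes whose lengths are congruent mod~$2$ to $|u|$ and $|v|$ respectively. Hence $z^m$ is an even (resp.\ odd) palindromic pair whenever $z$ is; this should take only a couple of lines.

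For the ``only if'' direction, assume $z^m$ is an even (resp.\ odd) palindromic pair; in particular it is a palindromic pair, hence credible. Write $r=z$ in the primitive case and $r=p$ in the case $z=p^2$. By Lemma~\ref{conjnum} the conjugates of $r^k$ are exactly the $k$-th powers of the (pairwise distinct) conjugates of $r$; since $(r^k)^R=(r^R)^k$ is such a conjugate and $r^R$ is primitive of the same length as $r$, we get that $r^R$ is a conjugate of $r$, so $r$ is credible, i.e.\ a palindromic pair. In the primitive case ($z=r$), Theorem~\ref{the:pp1} gives $z$ a unique factorization $z=uv$ into palindromes, while the $m$ words $(uv)^ju\cdot v(uv)^{m-1-j}$ for $j=0,\dots,m-1$ are $m$ pairwise distinct factorizations of $z^m$, hence (again by Theorem~\ref{the:pp1}) all of them. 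Each has first part of length $\equiv|u|\pmod 2$, and since $|u|\equiv|v|\pmod 2$ these factorizations are simultaneously all even or all odd according to the parity of $|u|$; comparing with the assumed factorization of $z^m$ forces $|u|$ (hence $|v|$) to have the required parity, which is exactly the assertion that $z$ is an even (resp.\ odd) palindromic pair. In the case $z=p^2$, the word $p$ has a unique factorization $p=\alpha\beta$ into palindromes, and $|\alpha|+|\beta|=|p|$ is odd, so exactly one of $|\alpha|,|\beta|$ is even; then $z=p^2$ has the two factorizations $\alpha\cdot\beta\alpha\beta$ and $\alpha\beta\alpha\cdot\beta$, one into two even palindromes and one into two odd palindromes, so $z$ is simultaneously an even and an odd palindromic pair and the implication holds trivially.

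The main obstacle is controlling \emph{all} factorizations of $z^m$ in the primitive case: exhibiting the family $(uv)^ju\cdot v(uv)^{m-1-j}$ is easy, but knowing it is exhaustive is what lets us read off the parity of $|u|$, and that exhaustiveness is precisely where Theorem~\ref{the:pp1} is needed rather than elementary manipulation. Everything else is routine bookkeeping with lengths modulo~$2$ together with the uniqueness statements already established.
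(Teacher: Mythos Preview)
Your proof is correct, but the ``only if'' direction takes a genuinely different route from the paper's. The paper argues directly: given any factorization $z^m=uv$ into palindromes, it writes $u=(xy)^tx$ and $v=y(xy)^{m-t-1}$ with $xy=z$ (simply by locating where the cut falls within the copies of $z$) and observes from $u=u^R$, $v=v^R$ that $x$ and $y$ are themselves palindromes; since $|z|$ is even, $|x|\equiv|u|$ and $|y|\equiv|v|\pmod 2$, giving the parity conclusion at once. No case split and no appeal to Theorem~\ref{the:pp1} are needed.

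Your approach instead shows the primitive root is credible via a conjugacy argument, invokes Kemp's Theorem~\ref{the:pp1} to count factorizations of $z^m$, exhibits them all explicitly, and reads off the parity; the case $z=p^2$ from Lemma~\ref{evenprim} is disposed of separately (and correctly) by noting that such a $z$ is simultaneously an even and an odd palindromic pair. The cost is reliance on the nontrivial Theorem~\ref{the:pp1}, whereas the paper's argument is elementary and uses only that $|z|$ is even. The payoff of your route is that it makes explicit that, in the primitive case, \emph{every} factorization of $z^m$ comes from the unique factorization of $z$---a fact the paper only spells out later, inside the proof of Theorem~\ref{enum}. Your ``if'' direction is essentially identical to the paper's.
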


\begin{proof}
\emph{Necessity}. If $z^m=uv$ for palindromes $u$ and $v$, then $u=(xy)^tx$, $v=y(xy)^{m-t-1}$ for some words $x,y$ and some integer $t$ such that $xy=z$ and $0\le t\le m{-}1$. Since $u$ and $v$ are palindromes, $x$ and $y$ are also palindromes, implying that $z$ is a palindromic pair. Since $|z|=|xy|$ is even by definition, the numbers $|x|$ and $|u|$ (resp., $|y|$ and $|v|$) have the same parity, whence the result. 

\emph{Sufficiency}. If $z=xy$ for palindromes $x$ and $y$, then $z^m=x\cdot y(xy)^{m-1}$. The word $y(xy)^{m-1}$ is a palindrome and its length has the same parity as $y$.\qed
\end{proof}

In what follows we suppose that the size $k>1$ of the alphabet is fixed (for the unary alphabet the theorem is straightforward), and write $\E(n)$ (resp., $\O(n)$) for $\E(n,k)$ (resp., $\O(n,k)$). We also assume that $n$ is even, because $\E(n)=\O(n)=0$ for odd $n$. Let $\E'(n)$ (resp., $\O'(n)$) be the number of even-primitive even (resp., odd) palindromic pairs of length $n$.

\begin{lemma} \label{main}
If $\O'(n)=k\cdot\E'(n)$ for all $n$, then $\O(n)=k\cdot\E(n)$ for all $n$.
\end{lemma}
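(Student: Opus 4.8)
\textbf{Proof plan for Lemma~\ref{main}.}
The plan is to reduce the enumeration of \emph{all} even (resp.\ odd) palindromic pairs to that of the even-primitive ones by introducing an \emph{even-primitive root}, and then to obtain the claim by substituting the hypothesis term by term in a divisor sum. First I would show that every even-length word $w$ has a unique even-primitive root, i.e.\ a unique even-primitive word $\zeta$ with $w=\zeta^s$ for some integer $s\ge1$. Existence and uniqueness follow from Lemma~\ref{evenprim} together with uniqueness of the ordinary primitive root $z$ of $w$: if $|z|$ is even, then $z$ is even-primitive and $\zeta=z$; if $|z|$ is odd, then $w=z^e$ with $e$ even, $z^2$ is even-primitive by Lemma~\ref{evenprim}, and $\zeta=z^2$. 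In either case $|\zeta|$ divides $|w|$, and any competing even-primitive root is a power of $z$, hence forced to equal $\zeta$.

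Next I would transfer the ``even/odd palindromic pair'' property between $w$ and $\zeta$. Writing $w=\zeta^s$ with $\zeta$ even-primitive, Lemma~\ref{evenpair} gives (for $s>1$; for $s=1$ it is vacuous) that $w$ is an even (resp.\ odd) palindromic pair iff $\zeta$ is. Combining this with the uniqueness of $\zeta$, the assignment $w\mapsto\zeta(w)$ is a bijection from the set of even palindromic pairs of length $n$ onto $\bigsqcup_{d\mid n}\{\text{even-primitive even palindromic pairs of length }d\}$, with inverse sending such a $\zeta$ of length $d$ to $\zeta^{n/d}$ (note that for each even-primitive $\zeta$ of length $d\mid n$ there is exactly one word of length $n$, namely $\zeta^{n/d}$, whose even-primitive root is $\zeta$). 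Therefore $\E(n)=\sum_{d\mid n}\E'(d)$, and the identical argument with ``odd'' in place of ``even'' yields $\O(n)=\sum_{d\mid n}\O'(d)$ (recall $\E'(d)=\O'(d)=0$ for odd $d$, so these sums effectively run over even divisors).

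Finally, assuming $\O'(d)=k\cdot\E'(d)$ for all $d$, I substitute into the second identity:
\[
\O(n)=\sum_{d\mid n}\O'(d)=k\sum_{d\mid n}\E'(d)=k\cdot\E(n),
\]
which is the desired conclusion.

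\textbf{Main obstacle.} The only point requiring care is verifying that $w\mapsto\zeta(w)$ really is a bijection onto the disjoint union over divisors of $n$ — in particular that $\zeta(\zeta^{n/d})=\zeta$ (no shorter even-primitive root sneaks in) and that $\zeta$ inherits the correct parity type. All of this is routine once Lemmas~\ref{evenprim} and~\ref{evenpair} are available; the remaining steps are a one-line divisor-sum manipulation.
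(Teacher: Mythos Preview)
Your proof is correct and follows essentially the same approach as the paper: both arguments partition even-length words according to their unique even-primitive root, invoke Lemma~\ref{evenpair} to transfer the even/odd palindromic-pair property between a word and its root, and then express $\E(n)$ and $\O(n)$ as divisor sums of $\E'$ and $\O'$ before substituting the hypothesis term by term. The paper is slightly terser (it asserts the unique representation $w=z^{n/2d}$ without deriving it from Lemma~\ref{evenprim}), while you spell out the bijection and its inverse explicitly, but the underlying argument is the same.
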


\begin{proof}
Any word $w$ of length $n$ can be uniquely represented as $w=z^{n/2d}$, where $2d$ is a divisor of $n$ and the word $z$ of length $2d$ is even-primitive. Then by Lemma~\ref{evenpair} the number of even (resp., odd) palindromic pairs of this form is equal to the number of even (resp., odd) palindromic pairs $z$. The latter number is exactly $\E'(2d)$ (resp., $\O'(2d)$). Hence,
$$
\O(n)=\sum_{2d\mid n}\O'(2d)=k\cdot\sum_{2d\mid n}\E'(2d)=k\cdot \E(n).\quad \qed
$$
\end{proof}

\begin{proof}[of Theorem~\ref{enum}]
In view of Lemma~\ref{main}, it suffices to prove the equality $\O'(n)=k\cdot\E'(n)$ for all even $n$.

Let $P_i(n)$ be the set of all palindromic pairs $w=uv$ such that $|w|=n$, $|u|=i$. Then the sets $P_e(n)$ of all even palindromic pairs and $P_o(n)$ of all odd palindromic pairs of length $n$ can be written as
\begin{subequations}
\begin{align}
P_e(n)&=P_0(n)\cup P_2(n)\cup\cdots\cup P_{n-2}(n)\label{unioneven}\\
P_o(n)&=P_1(n)\cup P_3(n)\cup\cdots\cup P_{n-1}(n)\label{unionodd}
\end{align}
\end{subequations}
In general, the sets $P_i(n)$ may intersect, but Lemma~\ref{evenrepr} tells us that an even-primitive word belongs to at most one set of \eqref{unioneven} and at most one set \eqref{unionodd}. Furthermore, let $2d$ be a divisor of $n$, and consider an even palindromic pair of the form $w=z^{n/2d}$, where $z$ is even-primitive. Then $z$ is an even palindromic pair by Lemma~\ref{evenpair}. By Lemma~\ref{evenrepr}, $z$ has a unique factorization into two even-length palindromes, say, $z=xy$. Hence, $w$ has exactly $n/2d$ factorizations into two even-length palindromes (see  the proof of Lemma~\ref{evenpair}):
$$
w=x\cdot y(xy)^{n/2d-1}=xyx\cdot y(xy)^{n/2d-2}=(xy)^{n/2d-1}x\cdot y.
$$
Thus, $w$ belongs to exactly $n/2d$ sets \eqref{unioneven}. In the same way, an odd palindromic pair of the form $w=z^{n/2d}$ belongs to exactly $n/2d$ sets \eqref{unionodd}. To get the value of $\E'(n)$ (resp., $\O'(n)$), we must sum up the cardinalities of all sets \eqref{unioneven} (resp., \eqref{unionodd}) and subtract the total contribution of the words that are not even-primitive. Since the number of even (resp., odd) palindromic pairs of the form $w=z^{n/2d}$ is the same as the number of even-primitive even (resp., odd) palindromic pairs $z$ (of length $2d$), we have
\begin{equation}\label{EO1}
\E'(n)=\!\!\sum_{i=0}^{n/2-1}\!\!\#P_{2i}(n)-\sum_{\substack{2d\mid n,\\2d<n}}\frac{n}{2d}\cdot\E'(2d), \
\O'(n)=\!\!\sum_{i=0}^{n/2-1}\!\!\#P_{2i+1}(n)-\sum_{\substack{2d\mid n,\\2d<n}}\frac{n}{2d}\cdot\O'(2d).
\end{equation}
Now let us compute $\#P_i(n)$. If $i=2j$, then to determine a word from $P_i(n)$ we can arbitrarily choose the first $j$ letters to fix the first palindrome and the last $(n/2-j)$ letters to fix the second one. In total, we have $n/2$ letters, concluding that $\#P_i(n)=k^{n/2}$, independent of $i$. On the other hand, if $i=2j{+}1$, then to determine a word from $P_i(n)$ we must choose the first $j{+}1$ letters to fix the first palindrome and the last $(n/2-j)$ letters to fix the second one (the central letters of both odd-length palindromes must be chosen). Thus, in this case we have $\#P_i(n)=k^{n/2{+}1}$, also independent of $i$. Therefore, we rewrite \eqref{EO1} as
\begin{equation}\label{EO2}
\E'(n)=\frac{n}{2}\cdot k^{n/2}-\sum_{\substack{2d\mid n,\\2d<n}}\frac{n}{2d}\cdot\E'(2d), \quad
\O'(n)=\frac{n}{2}\cdot k^{n/2+1}-\sum_{\substack{2d\mid n,\\2d<n}}\frac{n}{2d}\cdot\O'(2d).
\end{equation}
Finally, we prove the equality $\O'(n)=k\cdot\E'(n)$ by induction. The base case is $n=2$. All words of length 2 are even-primitive, and all of them are odd palindromic pairs. In contrast, the only even palindromic pairs of length 2 are palindromes (of the form $aa$). Thus, $\O'(2)=k^2$, $\E'(2)=k$.
For the inductive step, note that the ratio of the first terms for $\O'(n)$ and $\E'(n)$ in \eqref{EO2} is $k$, and the inductive assumption implies that the ratio of the second terms (i.e., sums) is also $k$. The theorem is proved.\qed
\end{proof}

\section{Rich Words}
\label{secfour}

Recall that a word of length $n$ is (palindromic) rich if it contains $n$ distinct nonempty palindromic factors. Basic properties of rich words were proved in \cite{DJP01,GJWZ09}. Some of them are collected in the following

\begin{proposition} \label{pro:rich1}
\begin{itemize}
\item[(i)] A factor of a rich word is rich.
\item[(ii)]  A reversal of a rich word is rich.
\item[(iii)]  A word $w$ is rich if and only if the longest palindromic suffix of any its prefix $w[1..i]$ has no other occurrences in $w[1..i]$.
\end{itemize}
\end{proposition}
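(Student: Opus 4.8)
The plan is to reduce everything to the fundamental lemma of Droubay, Justin, and Pirillo describing how the set of palindromic factors grows under appending a letter. Write $\mathrm{Pal}(w)$ for the set of distinct nonempty palindromic factors of $w$. The claim I would establish first is: for any word $w$ and letter $a$, the longest palindromic suffix $p$ of $wa$ is the only palindromic factor of $wa$ that can fail to be a factor of $w$; hence $|\mathrm{Pal}(wa)|\le|\mathrm{Pal}(w)|+1$, with equality if and only if $p$ is not a factor of $w$. The proof is short: any palindromic factor of $wa$ that is not a factor of $w$ must be a suffix of $wa$, so it suffices to show that every palindromic suffix $q$ of $wa$ with $|q|<|p|$ is a factor of $w$. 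Such a $q$ is a proper suffix of $p$, so $q=q^R$ is a prefix of $p^R=p$; this occurrence of $q$ as a prefix of $p$ ends at position at most $|wa|-1=|w|$, hence $q$ is a factor of $w$. Iterating from $w=\varepsilon$ gives $|\mathrm{Pal}(w)|\le|w|$ in general, so a word is rich precisely when this bound is attained.

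Given this, part (iii) follows by telescoping. Writing $w=w[1..n]$, we have $|\mathrm{Pal}(w)|=\sum_{i=1}^{n}\bigl(|\mathrm{Pal}(w[1..i])|-|\mathrm{Pal}(w[1..i-1])|\bigr)$, each summand is $0$ or $1$ by the growth lemma, and the sum equals $n$ if and only if every summand equals $1$, i.e.\ if and only if for every $i$ the longest palindromic suffix of $w[1..i]$ is not a factor of $w[1..i-1]$. Since this suffix always occurs at the end of $w[1..i]$, "not a factor of $w[1..i-1]$" is exactly "having no occurrence in $w[1..i]$ other than that suffix occurrence", which is the statement of (iii).

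For (ii) I would simply observe that reversal is an involution on factors which fixes palindromes: $u\in\mathrm{Pal}(w)$ iff $u^R=u\in\mathrm{Pal}(w^R)$, so $\mathrm{Pal}(w)=\mathrm{Pal}(w^R)$ as sets; combined with $|w|=|w^R|$ this gives that $w$ is rich iff $w^R$ is rich. Part (i) then splits into the prefix and suffix cases. For a prefix $w[1..m]$ of a rich word $w$ of length $n$: richness forces every summand in the telescoping sum above to equal $1$, hence $|\mathrm{Pal}(w[1..i])|=i$ for all $i\le n$, and in particular $w[1..m]$ is rich. For a suffix $u$ of $w$: the word $u^R$ is a prefix of $w^R$, which is rich by (ii), hence $u^R$ is rich by the prefix case, hence $u=(u^R)^R$ is rich by (ii) again. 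A general factor of $w$ is a prefix of a suffix of $w$, so it is rich as well.

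The only mildly delicate point is the core growth lemma — precisely the observation that a strictly shorter palindromic suffix of $wa$ is forced to be a prefix of the longest one and therefore occurs entirely inside $w$; once that is in hand, statements (i)--(iii) are bookkeeping, the telescoping and the reversal symmetry being entirely routine.
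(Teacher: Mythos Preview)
Your argument is correct and is precisely the standard one from Droubay--Justin--Pirillo: the one-letter growth lemma (at most one new palindrome per appended letter, namely the longest palindromic suffix), then telescoping for (iii), reversal symmetry for (ii), and prefix/suffix reduction for (i). There is nothing to compare against here, since the paper does not prove Proposition~\ref{pro:rich1} at all --- it merely quotes these facts with a citation to \cite{DJP01,GJWZ09} --- so your write-up in fact supplies what the paper omits.
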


Let $\rich$ be the language of binary rich words. Here we analyze its growth function (sequence A216264 in the OEIS \cite{Slo}), starting with a lower bound.  This is, to our knowledge, the first nontrivial (not polynomial) lower bound obtained for the number of binary rich words.

Every word has a unique \emph{run-length encoding} of the form $a_1^{s_1}a_2^{s_2}\cdots a_k^{s_k}$, for $s_1, \ldots, s_k \ge 1$, where the letters $a_i$ and $a_{i+1}$ are distinct for any $i$.  We call each term $a_i^{s_i}$ a \emph{block}.  Consider the language $I$ of binary words whose run-length encoding satisfies $s_i\le s_{i+2}$ for all $i=1,\ldots,k{-}2$. The language $I$ is close to the languages of intermediate growth studied
in \cite{Sh09dam}.

\begin{theorem} \label{the:rich1}
$I\subseteq\rich$.
\end{theorem}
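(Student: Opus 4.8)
The plan is to use the characterization of richness from Proposition~\ref{pro:rich1}(iii): a word $w$ is rich iff for every prefix $w[1..i]$, its longest palindromic suffix occurs only once in $w[1..i]$. Equivalently, when we build $w$ letter by letter, each new letter must create a ``new'' longest palindromic suffix — one that was not already present. So the strategy is: take an arbitrary word $w\in I$ with run-length encoding $a_1^{s_1}\cdots a_k^{s_k}$ satisfying $s_i\le s_{i+2}$, and verify this condition at every position. It suffices (by induction on prefixes, together with Proposition~\ref{pro:rich1}(i), since every prefix of a word in $I$ is again in $I$) to check the condition only at the \emph{last} letter of $w$; that is, to show that the longest palindromic suffix of $w$ has exactly one occurrence in $w$.

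First I would dispose of the trivial cases: if $w$ lies entirely in one or two blocks (so its run-length encoding has $k\le 2$), then $w$ is of the form $a^s$ or $a^s b^t$, and one checks directly that such words are rich (every binary word with at most two blocks is rich — its palindromic factors are exactly the $a^i$, $b^j$, and the full word when it is itself a palindrome). For the general inductive step, write $w=w'\,a_k^{s_k}$ and suppose $w'$ is rich. Adding the letter $a_k$ one at a time: the first $a_k$ of the final block changes the alphabet-parity of the last block relative to $a_{k-1}$, and the key claim is that as we append each successive $a_k$, the longest palindromic suffix is the maximal palindromic suffix of the current prefix, and it is \emph{unrepeated}. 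The heart of the matter is to identify exactly what the longest palindromic suffix of $a_1^{s_1}\cdots a_j^{s_j}$ is: I expect it to be the ``fold'' of the tail, namely the longest suffix of the form $a_j^{s_j}a_{j-1}^{s_{j-1}}\cdots a_{j-\ell+1}^{s_{j-\ell+1}}\cdots a_j^{s_j}$ that happens to be a palindrome, and that the condition $s_i\le s_{i+2}$ is precisely what forces any occurrence of this palindrome to be the suffix occurrence — an earlier occurrence would require a block to be matched against a shorter block of the same letter two positions back, contradicting the monotonicity.

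The main obstacle, and where I would spend the most care, is this last point: controlling the occurrences of the longest palindromic suffix and proving it is not repeated. Concretely, suppose the longest palindromic suffix $p$ of a prefix $u=a_1^{s_1}\cdots a_j^{s_j}$ of $w$ occurs also earlier in $u$. Since $p$ ends in $a_j^{s_j}$ (a full final block, as $a_j$ differs from $a_{j-1}$) and is a palindrome, it begins with $a_j^{s_j}$; an internal occurrence of $p$ would place this block $a_j^{s_j}$ inside $u$, aligned with some block $a_m^{t}$ with $a_m=a_j$ and $t\ge s_j$, and then reading $p$ backwards from there would force, block by block, a chain of inequalities on the $s_i$ that collides with $s_i\le s_{i+2}$ (roughly: the internal occurrence would need a block strictly shorter where the suffix occurrence has a longer one). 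Making this alignment argument precise — handling the boundary blocks of $p$ where it may only partially cover a block of $u$, and the possibility $t>s_j$ — is the delicate combinatorial core; the Fine–Wilf-type Lemma~\ref{periods} and Lemma~\ref{lysch} are the natural tools to rule out the degenerate periodic configurations that could otherwise sneak in. Once the longest palindromic suffix of every prefix is shown to be unrepeated, Proposition~\ref{pro:rich1}(iii) gives $w\in\rich$ immediately, completing the proof.
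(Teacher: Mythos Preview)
Your reduction ``every prefix of a word in $I$ is again in $I$'' is false, and this is a genuine gap. Take $w=0^21^10^2\in I$; its prefix $0^21^10^1=0010$ has $s_1=2>1=s_3$, so $0010\notin I$. More generally, whenever you are in the middle of building the $l$th block and the partial count $s'_l$ is still below $s_{l-2}$, the current prefix violates the defining inequality. So you cannot simply check the last letter of a word in $I$; you must verify Proposition~\ref{pro:rich1}(iii) at \emph{every} position, including those where the current (incomplete) last block is strictly shorter than the block two back. Your block-alignment sketch treats only the situation where the final block $a_j^{s_j}$ is complete (hence $s_j\ge s_{j-2}$), and there the monotonicity chain $s_m\le s_{m+2}\le\cdots\le s_k$ does most of the work. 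The missing case $s'_l<s_{l-2}$ is precisely where that chain breaks, and it is the crux of the argument.

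The paper's proof uses the same overall strategy (Proposition~\ref{pro:rich1}(iii) plus induction on prefixes and a block-based case analysis) but handles this correctly: it does not assume the prefix lies in $I$, and its Case~1 is exactly the situation $s'_l<s_{l-2}$. There the paper does not try to identify the longest palindromic suffix from scratch; instead it takes the longest palindromic suffix $v$ of $w[1..i]$ (unioccurrent by induction), observes it must have the shape $0^{s'_l}(1^{s_{l-1}}0^{s_{l-2}})^k1^{s_{l-1}}0^{s'_l}$, and shows that either $0v0$ or a one-period-shorter palindrome $0v'0$ is a unioccurrent suffix of $w[1..i{+}1]$. That ``extend the previous palindrome'' trick is what lets the induction go through without the closure of $I$ under prefixes. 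Your direct block-matching approach could in principle be pushed through this case too, but you would have to redo the alignment argument with a partial last block and deal with the resulting periodicity (your chain of inequalities collapses to equalities rather than a contradiction), which you have not done.
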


\begin{proof}
Let $w=0^{s_1}1^{s_2}0^{s_3}\cdots a_k^{s_k}\in I$ for $s_1, \ldots, s_k \geq 1$.  We prove the richness of all the prefixes of $w$ (including $w$ itself) by induction on the length of $w$.

The base case is trivial. Now assume that $w[1..i]$ is rich and $w[i]=0$ (the case $w[i]=1$ is analysed in the same way). We add the letter $w[i{+}1]$ and search for the ``new'' palindromic suffix described in Proposition~\ref{pro:rich1} (iii). If $w[1..i]=0^i$, there is nothing to prove: the new suffix is $0^{i+1}$ if $w[i{+}1]=0$ and $1$ if $w[i{+}1]=1$. So let $w[1..i]=0^{s_1}1^{s_2}\cdots 0^{s'_l}$, where $s'_l\le s_l$ and $1<l\le k$.

If $s'_l\ge s_{l-2}$ and $w[i{+}1]=0$, then the new palindromic suffix is $0^{s'_l+1}$; if $s'_l> s_{l-2}$ and $w[i{+}1]=1$, then the new suffix is $10^{s'_l}1$. Thus, two nontrivial cases remain.

\emph{Case 1}: $s'_l< s_{l-2}$; clearly then $w[i{+}1]=0$. The word $w[1..i]$ ends with a palindrome $0^{s'_l}$, but it has more occurrences in $w[1..i]$. Hence, it is not the longest palindromic suffix. Then this suffix, which we denote by $v$, begins with $0^{s'_l}$ and ends with $1^{s_{l-1}}0^{s'_l}$. If
$v=0^{s'_l}1^{s_{l-1}}0^{s'_l}$, then $0v$ is a suffix of $w[1..i]$ because $s'_l< s_{l-2}$. Hence, the palindrome $0v0$ is a suffix of $w[1..i{+}1]$, and it has no earlier occurrences because $v$ occurs in $w[1..i]$ only once.

If $v$ intersects more than three blocks, then $v=0^{s'_l}(1^{s_{l-1}}0^{s_{l-2}})^k1^{s_{l-1}}0^{s'_l}$ for some $k\ge 1$. If $v$ is preceded by 0 in $w$, then the suffix $0v0$ of $w[1..i{+}1]$ is a new palindrome, as before. Finally, if $v$ is preceded by $1$, consider the palindrome $v'=0^{s'_l}(1^{s_{l-1}}0^{s_{l-2}})^{k-1}1^{s_{l-1}}0^{s'_l}$. The word $w[1..i{+}1]$ ends with $0v'0$, and this is a new palindrome, because $0v'$ occurs in $w[1..i]$ only as a suffix. (Indeed, $0v'$ begins with $0^{s'_l+1}$, and there is no large enough block of zeroes to the left of $v$.)

\emph{Case 2}: $s'_l= s_{l-2}$ and $w[i{+}1]=1$. Since $0^{s'_l}1^{s_{l-1}}0^{s'_l}$ is a palindromic suffix of $w[1..i]$, the longest such suffix is $v=(0^{s'_l}1^{s_{l-1}})^k0^{s'_l}$ for some $k\ge 1$.  Clearly, $1v$ is a suffix of $w[1..n]$, implying that $w[1..i{+}1]$ ends with the new palindrome $1v1$.

Thus, in all cases, $w[1..i{+}1]$ ends with a new palindrome. The inductive step is finished.\qed
\end{proof}

\begin{theorem} \label{the:rich2}
The growth function of the language of binary rich words satisfies
\begin{equation}\label{eq:CRn}
\ln(C_{\rich}(n))\ge \frac{2\pi}{\sqrt3}\cdot\sqrt{n}-O(\ln n).
\end{equation}
\end{theorem}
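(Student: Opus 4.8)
The plan is to bound $C_{\rich}(n)$ from below by the growth function of the sublanguage $I$ (legitimate by Theorem~\ref{the:rich1}), and then to count an explicit, easily parametrized family of words of $I$. A word of $I$ with $k=2m$ blocks and first letter $0$ is exactly a word $W=0^{\lambda_1}1^{\mu_1}0^{\lambda_2}1^{\mu_2}\cdots 0^{\lambda_m}1^{\mu_m}$ with all $\lambda_i,\mu_i\ge 1$; its run-length encoding is $(s_1,\dots,s_{2m})=(\lambda_1,\mu_1,\lambda_2,\mu_2,\dots,\lambda_m,\mu_m)$, and (since consecutive blocks automatically get distinct letters) the defining condition $s_i\le s_{i+2}$ of $I$ is equivalent to the two \emph{independent} chains of inequalities $\lambda_1\le\lambda_2\le\cdots\le\lambda_m$ and $\mu_1\le\mu_2\le\cdots\le\mu_m$; moreover $|W|=\sum_i\lambda_i+\sum_i\mu_i$. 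The key observation — and the reason the constant is $2\pi/\sqrt3$ and not $\pi/\sqrt3$ — is precisely this decoupling: the odd-indexed and the even-indexed block lengths can be picked as two unrelated non-decreasing tuples, so the number of such words is (essentially) a \emph{square} of a partition count.

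To make this quantitative, write $p(N)$ for the number of partitions of $N$ and $p(N,m)$ for the number of partitions of $N$ into exactly $m$ positive parts, so $\sum_{m=1}^{N}p(N,m)=p(N)$. Fix $m$. Choosing $(\lambda_1\le\cdots\le\lambda_m)$ to be a partition of $A:=\lceil n/2\rceil$ and $(\mu_1\le\cdots\le\mu_m)$ a partition of $B:=\lfloor n/2\rfloor$ produces, via the description above, a word of $I$ of length $n$; distinct pairs of tuples (and distinct values of $m$, since $k=2m$ is then different) give distinct words, because the run-length encoding is unique. Summing over $m$ and using $p(N{+}1,m)\ge p(N,m)$ (add $1$ to the largest part), then Cauchy--Schwarz (at most $N=\lfloor n/2\rfloor$ of the summands are nonzero), we get
\begin{equation*}
C_{\rich}(n)\ \ge\ C_I(n)\ \ge\ \sum_{m\ge 1}p(\lceil n/2\rceil,m)\,p(\lfloor n/2\rfloor,m)\ \ge\ \sum_{m\ge 1}p(N,m)^2\ \ge\ \frac1N\Bigl(\sum_{m\ge 1}p(N,m)\Bigr)^2\ =\ \frac{p(N)^2}{N}.
\end{equation*}

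It remains to plug in the Hardy--Ramanujan asymptotic $p(N)\sim \dfrac{1}{4N\sqrt3}\exp\!\bigl(\pi\sqrt{2N/3}\bigr)$, which gives $\ln p(N)=\pi\sqrt{2N/3}-O(\ln N)$; hence with $N=\lfloor n/2\rfloor$,
\begin{equation*}
\ln C_{\rich}(n)\ \ge\ 2\ln p(N)-\ln N\ \ge\ 2\pi\sqrt{2N/3}-O(\ln n)\ =\ \frac{2\pi}{\sqrt3}\sqrt{2\lfloor n/2\rfloor}-O(\ln n)\ =\ \frac{2\pi}{\sqrt3}\sqrt n-O(\ln n),
\end{equation*}
using $\sqrt{2\lfloor n/2\rfloor}=\sqrt n-O(1/\sqrt n)$ at the end. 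There is no serious obstacle here: the combinatorial core is the explicit family of rich words, whose only delicate point is checking that $s_i\le s_{i+2}$ is equivalent to the two separate monotonicity chains and that the parametrization is injective. The one genuinely nontrivial ingredient imported from outside is the \emph{sharp} lower bound for $p(N)$: an elementary estimate $p(N)\ge\exp(c\sqrt N)$ with a slightly smaller $c$ would already yield a cruder bound of the same shape, but the exact leading constant $2\pi/\sqrt3$ needs the precise partition asymptotics. Note also that this approach cannot do better than $\tfrac{2\pi}{\sqrt3}\sqrt n$ up to $O(\ln n)$, since $p(N)^2/N$ is, up to polynomial factors, the exact number of words of $I$ of length $n$ that arise in this way.
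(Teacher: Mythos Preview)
Your proof is correct and follows essentially the same route as the paper's: both use Theorem~\ref{the:rich1} to reduce to $C_I(n)$, observe that a word of $I$ with $2m$ blocks corresponds to a \emph{pair} of independent partitions with $m$ parts each, and then bound $\sum_m p(N,m)^2$ from below by $p(N)^2$ divided by a polynomial before invoking Hardy--Ramanujan. The only cosmetic differences are that the paper uses $\bigl(\max_m p(N,m)\bigr)^2\ge p(N)^2/N^2$ where you use Cauchy--Schwarz, and the paper handles odd~$n$ via monotonicity of $C_I$ rather than your $\lceil n/2\rceil,\lfloor n/2\rfloor$ split---both yield the same leading term with only an $O(\ln n)$ discrepancy.
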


\begin{proof}
Let $p(n)$ [$p(n,k)$] denote the number of integer partitions (resp., partitions with exactly $k$ parts) of $n$. There is a natural injection of partitions of $n$ into words of length $n$: a partition $s_1+\cdots+s_k=n$, where the parts are written in increasing order, defines the word $w=0^{s_1}1^{s_2}0^{s_3}\cdots a_k^{s_k}$. Note that $w\in I$, implying $C_I(n)\ge p(n)$. By the famous Hardy-Ramanujan-Uspensky formula, 
\begin{equation}\label{eq:HRU}
p(n)\sim \frac{e^{\pi\sqrt{2n/3}}}{4n\sqrt{3}} \text{ as }n\to\infty.
\end{equation}
So we already have the bound similar to \eqref{eq:CRn}, but with a smaller constant in the leading term. To get the desired constant, we assume $n$ to be even (since $C_I(n)$ is an increasing function, substituting the bound obtained for $n=2m$ for $n=2m{+}1$ gives the general bound of the same order of growth). Note that any pair of partitions $s_1+\cdots+s_k=t_1+\cdots +t_k=n/2$ can be mapped to the word $0^{s_1}1^{t_1}\cdots 0^{s_k}1^{t_k}\in I$, and this map is injective. Then we have
\begin{equation}\label{eq:CIn}
C_I(n)>\sum_{k=1}^{n/2-1} \big(p(n/2,k)\big)^2>\big(\max_k p(n/2,k)\big)^2> \frac {4\big(p(n/2)^2\big)}{n^2}.
\end{equation}
Substituting \eqref{eq:HRU} and taking logarithms, we obtain \eqref{eq:CRn}.\qed
\end{proof}

\begin{remark}
More precise estimates of $C_I(n)$ cannot give better bounds for $C_\rich(n)$, because all inequalities in \eqref{eq:CIn} affect only the $O$-term in \eqref{eq:CRn}. Using Proposition~\ref{pro:rich1}\,(i,ii), one can extend $I$, closing it under taking factors and reversals; but the effect of such extensions is swallowed by the $O$-term as well. 
\end{remark}

How good is this lower bound, which is roughly $37^{\sqrt{n}}$ divided by a polynomial? It is unclear, because no good upper bound for $C_\rich(n)$ has yet been obtained. The function $C_\rich(n)$ is submultiplicative due to Proposition~\ref{pro:rich1} (i), so its growth rate $\lim_{n\to\infty}(C_\rich(n))^{1/n}$ is majorized by any value $(C_\rich(n))^{1/n}$, according to Fekete's lemma \cite{Fek23}. Also, the ratio $\frac{C_\rich(n)}{C_\rich(n-1)}$ gives us a clue about the growth of this function. Until recently, the number of known values of $C_\rich(n)$ was quite small ($n\le 25$ on OEIS, posted by the second author). But short rich words constitute a substantial share of all short words, giving us $(C_\rich(25))^{1/25}\approx 1.818$ and $\frac{C_\rich(25)}{C_\rich(24)}\approx 1.599$. Recently Mikhail Rubinchik (personal communication) invented a new technique and computed this sequence up to $n=60$.  From his calculations we get
$(C_\rich(60))^{1/60}\approx 1.605$ and $\frac{C_\rich(60)}{C_\rich(59)}\approx 1.394$. Such a fast drop is unusual for exponential growth functions of languages closed under factors (see \cite{Sh12rev} and the references therein). So this is an argument in favor of a subexponential growth of $C_\rich(n)$.

A much stronger argument is provided in Table~\ref{nsqrtn} below: $C_\rich(n)\le n^{\sqrt{n}}$ for $4\le n\le 60$, and, moreover, the function $ n^{\sqrt{n}}$ seems to grow faster. So we propose the following conjecture, which implies that the bound of Theorem~\ref{the:rich2} is quite reasonable.

\begin{conjecture}
One has $C_\rich(n)=O\big(\frac{n}{g(n)}\big)^{\sqrt{n}}$, or, equivalently, $\ln (C_\rich(n))=O(\sqrt{n}(\ln n - g(n))$, for some infinitely growing function $g(n)$.
\end{conjecture}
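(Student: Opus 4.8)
Since no nontrivial upper bound on $C_\rich(n)$ is presently known, the target is modest: it is enough to prove $\ln C_\rich(n)\le\sqrt n\ln n-\omega(\sqrt n)$ --- for instance, to bring the leading constant in some bound $\ln C_\rich(n)\le c\,\sqrt n\ln n$ strictly below $1$ --- since then the equivalent form of the conjecture holds with $g(n)=\exp\bigl((\sqrt n\ln n-\ln C_\rich(n))/\sqrt n\bigr)\to\infty$. The natural plan is to attach to every binary rich word $w=w[1..n]$ a short \emph{certificate} and to count certificates. The tool is Proposition~\ref{pro:rich1}(iii): $w$ is rich iff at every step the longest palindromic suffix $q_i$ of $w[1..i]$ occurs exactly once in $w[1..i]$, and $w$ is recovered from $q_1,\dots,q_n$ because $w[i]$ is the last letter of $q_i$. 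This already gives an exact reduction: $q_n$ is a palindrome of some length $\ell$, so the last $\lfloor\ell/2\rfloor$ letters of $w$ are forced mirror copies of letters sitting inside $w[1..n-\lfloor\ell/2\rfloor]$; hence $w$ is determined by $\ell$ and the (still rich) prefix $w[1..n-\lfloor\ell/2\rfloor]$. Iterating this ``peel off the rear half of the longest palindromic suffix'' operation encodes $w$ by a sequence of integers.

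The reason the problem is hard is that this certificate is too long: when the longest palindromic suffix is short, only a couple of letters are peeled off, and the obvious bounds allow the number of peeling steps to reach $\Theta(n/\log n)$, which yields only the useless estimate $C_\rich(n)\le n^{O(n/\log n)}$. So the proof must exploit more structure, and the lower-bound construction is the guide: the extremal rich words are close to iterated palindromic closures of gently increasing run-length blocks, and one should show that this is essentially the only way to be rich, up to $n^{o(\sqrt n)}$ decorations. Concretely, I would fix a threshold $T=\Theta(\sqrt n)$ and split the positions of $w$ according to whether the longest palindromic suffix there is long ($\ge T$) or short ($<T$). The long part is handled by the peeling reduction together with Lemma~\ref{periods}: a long new palindromic suffix forces a long period on the corresponding prefix, so a whole stretch of $w$ is pinned down by $O(1)$ parameters and there are only $O(n/T)=O(\sqrt n)$ such events. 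The short part must then be encoded so that an entire run of short-suffix steps costs only a partition-type amount of information rather than $\Theta(\log n)$ bits per step.

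Controlling those short-palindrome regions is the main obstacle. A useful toehold is that a binary palindrome of length $\le\log_2 n$ is one of at most $O(\sqrt n)$ words, so a rich word of length $n$ has at most $O(\sqrt n)$ palindromic factors of length $\le\log_2 n$, hence at least $n-O(\sqrt n)$ of length exceeding $\log_2 n$: ``most'' of $w$ is covered by medium-to-long palindromes, which is what one wants for a periodicity analysis. But richness imposes no clean structure in the short range --- for example $000100=0^310^2$ is rich although $s_1>s_3$, so it violates the defining condition of $I$ --- and a purely local argument cannot work; one needs a global accounting showing that the interlocking of the overlapping medium-length palindromes forces the short-range structure into one of $n^{o(\sqrt n)}$ shapes, with two interleaved partitions (as in the construction behind~\eqref{eq:CIn}) being the expected dominant case. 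Making this rigorous --- even in the weaker form $C_\rich(n)\le n^{O(\sqrt n)}$, let alone with the leading-constant refinement the conjecture demands --- is precisely the open crux.
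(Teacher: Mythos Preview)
The statement you are attempting is labelled a \emph{conjecture} in the paper, and the paper offers no proof of it: the only support given is the numerical comparison in Table~\ref{nsqrtn} showing $C_\rich(n)\le n^{\sqrt n}$ for $4\le n\le 60$ with a decreasing ratio. There is therefore no ``paper's own proof'' to compare your attempt against.

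Your proposal is not a proof either, and you acknowledge this in your final sentence. The genuine gap is precisely the one you name: after the peeling reduction handles the positions whose longest palindromic suffix has length $\ge T=\Theta(\sqrt n)$, you must show that the remaining positions---those with short palindromic suffixes---can be encoded with $o(\sqrt n\,\ln n)$ bits in total, equivalently that they determine at most $n^{o(\sqrt n)}$ possible configurations. You supply no mechanism for this beyond the heuristic that the lower-bound family behind \eqref{eq:CIn} ``should'' be nearly extremal, and you correctly note that richness imposes no clean local constraint in this regime. Without that step the encoding argument collapses to the bound you yourself quote, $C_\rich(n)\le n^{O(n/\log n)}=2^{O(n)}$, which is trivial. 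Even the weaker target $C_\rich(n)\le n^{O(\sqrt n)}$ is, as you say, open; what you have written is a reasonable research programme, not a proof.
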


\begin{table}[!thb]
\vspace*{-4mm}
\caption{Number of rich words compared to the $n^{\sqrt{n}}$ function.}\label{nsqrtn}
\centerline{
\tabcolsep=7pt
\begin{tabular}{|c|ccc|}
\hline
$n$ & $C_\rich(n)$ &  $n^{\sqrt{n}}$ & Ratio\\
\hline
4 &  $16$ & $16$ & $1$ \\
5 &  $32$ & $\approx36.55$ & $0.875$\\
\hline
$\cdots$ & & &  \\
\hline
25 & $3\;089\;518$ & $\approx9.766\cdot 10^6$& $0.335$ \\
26 & $4\;903\;164$ & $\approx1.641\cdot 10^7$& $0.316$ \\
\hline
$\cdots$ & & & \\
\hline
59 &$1\;530\;103\;385\;844$  & $\approx4.001\cdot 10^{13}$ & $0.038$ \\
\hline
60 &$2\;132\;734\;033\;216$  & $\approx5.936\cdot 10^{13}$ & $0.036$ \\
\hline
\end{tabular}
}
\vspace*{-2mm}
\end{table}

\section{Antipalindromes}
\label{secfive}

In this section, the alphabet is $\{0,1\}$. For a word $x \in \{ 0, 1
\}^*$, its \emph{negation} $\overline{x}$ is obtained by changing each
$0$ in $x$ to $1$ and vice versa.  A word $x$ is an {\it
antipalindrome} if $x = \overline x^R$.  Thus, for example, {\tt
001011} is an antipalindrome; note that all antipalindromes have even
length. Let $\antipal$ denote the set of all antipalindromes. There are
definite similarities between the properties of palindromes and
antipalindromes. The following analogs of
Propositions~\ref{pro:xn}-\ref{pro:sha} are straightforward.

\begin{proposition} \label{prop8}
For all integers $m, n \geq 1$, $x^m$ is an antipalindrome if and only if $x^n$ is an antipalindrome.
\end{proposition}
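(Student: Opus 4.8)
The plan is to prove something slightly sharper, namely that $x^m$ is an antipalindrome if and only if $x$ itself is an antipalindrome; the proposition is then immediate, since that condition makes no reference to the exponent. First I would record the two elementary facts that reversal and negation each interact with concatenation in the expected way, $(uv)^R=v^Ru^R$ and $\overline{uv}=\overline u\,\overline v$. Iterating these gives $(x^m)^R=(x^R)^m$ and $\overline{x^m}=(\overline x)^m$, so that $x^m$ being an antipalindrome, i.e.\ $(x^m)^R=\overline{x^m}$, is equivalent to the identity $(x^R)^m=(\overline x)^m$.

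Next I would invoke the standard observation that two words of the same length having a common power are equal: if $y^m=z^m$ for some $m\ge1$ and $|y|=|z|$, then both $y$ and $z$ equal the length-$|y|$ prefix of the word $y^m$, hence $y=z$. Since $|x^R|=|\overline x|=|x|$, this turns $(x^R)^m=(\overline x)^m$ into $x^R=\overline x$, i.e.\ into the statement that $x$ is an antipalindrome. Chaining the equivalences, $x^m$ is an antipalindrome iff $x$ is iff $x^n$ is, which is exactly the claim.

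I do not expect any real obstacle: this is the antipalindromic counterpart of the (equally brief) argument behind Proposition~\ref{pro:xn}, and the only point of substance is the equal-length root-uniqueness fact, which is elementary and can, if one prefers, be extracted from Lemma~\ref{lysch}. As a side remark worth making, the argument shows — just as in the palindrome case — that an antipalindromic power already forces its base to be an antipalindrome, so Proposition~\ref{prop8} is in this sense a degenerate statement, which is why it is ``straightforward.''
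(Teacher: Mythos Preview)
Your argument is correct; the paper itself gives no proof, merely labeling the proposition ``straightforward'' as the antipalindromic analog of Proposition~\ref{pro:xn}, and what you have written is precisely the elementary verification one would expect.
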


\begin{proposition} \label{pro:auv}
For all nonempty antipalindromes $u, v$, the word $uv$ is an antipalindrome iff both $u$ and $v$ are powers of an antipalindrome $z$.
\end{proposition}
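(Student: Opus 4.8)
The plan is to imitate the proof of Proposition~\ref{pro:uvuv} almost verbatim, replacing the reversal map $x\mapsto x^R$ by the antimorphic involution $x\mapsto\overline{x}^R$. The key algebraic fact is that $\theta(x)=\overline{x}^R$ is an involutive antimorphism: $\theta(xy)=\theta(y)\theta(x)$ and $\theta(\theta(x))=x$, because negation and reversal commute and each is an involution. Thus "antipalindrome" means exactly "fixed point of $\theta$", just as "palindrome" means "fixed point of reversal", and every structural step used for palindromes carries over.

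First I would prove the forward direction. Suppose $uv$ is an antipalindrome, i.e. $uv=\theta(uv)=\theta(v)\theta(u)=\overline{v}^R\,\overline{u}^R$. Since $u,v$ are antipalindromes, $\theta(u)=u$ and $\theta(v)=v$, so $uv=vu$. By Lemma~\ref{lysch}(i), $u=z^i$ and $v=z^j$ for some word $z$ and positive integers $i,j$. It remains to check that $z$ can be taken to be an antipalindrome: $z^i=u$ is an antipalindrome, so by Proposition~\ref{prop8} (the antipalindromic analogue of Proposition~\ref{pro:xn}) $z^1=z$ is an antipalindrome as well. This gives the "only if" direction.

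For the converse, if $u=z^i$ and $v=z^j$ with $z$ an antipalindrome, then $uv=z^{i+j}$, which is an antipalindrome by Proposition~\ref{prop8}. This finishes the proof.

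There is essentially no obstacle here: the only point needing a word of care is the observation that $\theta$ is an antimorphism (so that $\theta(uv)=\theta(v)\theta(u)$ rather than $\theta(u)\theta(v)$), which is exactly what makes the argument $uv=vu$ go through; and the appeal to Proposition~\ref{prop8} to transfer "power is an antipalindrome" down to "base is an antipalindrome." Both are immediate. If one wanted to avoid even restating $\theta$, one could spell it out directly: $(uv)=\overline{(uv)^R}=\overline{v^R u^R}=\overline{v^R}\,\overline{u^R}=\overline{v}^R\,\overline{u}^R=vu$, using $\overline{u}^R=u$ and $\overline{v}^R=v$.
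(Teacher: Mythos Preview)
Your proof is correct and is exactly the intended argument: the paper does not write out a proof of Proposition~\ref{pro:auv} at all, simply stating that it (together with the neighboring propositions) is a straightforward analogue of Proposition~\ref{pro:uvuv}, obtained by replacing reversal with the involutive antimorphism $x\mapsto\overline{x}^R$. Your write-up supplies precisely those details, including the only subtle point (that $\theta$ is an antimorphism, yielding $uv=vu$) and the appeal to Proposition~\ref{prop8} to conclude that the common root $z$ is itself an antipalindrome.
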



\begin{proposition}
The word $x$ is an antipalindrome if and only if there exists a word $z$ such that $x = \overline{z} \, \sha \, z^R$.
\end{proposition}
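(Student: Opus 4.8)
The plan is to mimic the proof of Proposition~\ref{pro:sha}, adapting it to the antipalindrome setting. Recall that $x$ is an antipalindrome iff $x=\overline{x}^R$, and every antipalindrome has even length, say $|x|=2n$. So I would write $x$ in terms of the two ``interleaved'' subwords of positions with odd and even index: let $z=x[1]x[3]\cdots x[2n-1]$ and $y=x[2]x[4]\cdots x[2n]$, so that $x=z\,\sha\,y$ by definition of the perfect shuffle. The claim is that $x\in\antipal$ if and only if $y=z^R$, which is equivalent to the displayed characterization $x=\overline{z}\,\sha\,z^R$ after renaming — actually one must be slightly careful: we want $x=\overline{z}\,\sha\,z^R$, so the odd-indexed subword is $\overline{z}$ and the even-indexed subword is $z^R$. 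Set $z'=\overline{x[1]x[3]\cdots x[2n-1]}$, equivalently $z=\overline{z'}$; then $x=\overline{z'}\,\sha\,(x[2]x[4]\cdots x[2n])$ always, and the content of the proposition is that $x$ is an antipalindrome precisely when $x[2]x[4]\cdots x[2n]=z'^R$.

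The key computation is to expand $\overline{x}^R$ in terms of the interleaved subwords. Reversing $x=z\,\sha\,y$ swaps the roles and reverses each piece: $(z\,\sha\,y)^R=y^R\,\sha\,z^R$ (the last letter $y[n]$ becomes first, then $z[n]$, etc.). Negating commutes with the shuffle: $\overline{z\,\sha\,y}=\overline{z}\,\sha\,\overline{y}$. Hence $\overline{x}^R=\overline{y^R\,\sha\,z^R}=\overline{y^R}\,\sha\,\overline{z^R}=\overline{y}^R\,\sha\,\overline{z}^R$. Therefore $x=\overline{x}^R$ becomes $z\,\sha\,y=\overline{y}^R\,\sha\,\overline{z}^R$. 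Since the perfect shuffle of two length-$n$ words is determined by (and determines) the ordered pair of those words, this equality holds iff $z=\overline{y}^R$ and $y=\overline{z}^R$; and each of these two conditions is equivalent to the other (apply negation and reversal). So $x\in\antipal$ iff $y=\overline{z}^R$, i.e., iff $x=z\,\sha\,\overline{z}^R$. Finally set $w=\overline{z}$ (so $z=\overline{w}$) to get $x=\overline{w}\,\sha\,w^R$, which is exactly the stated form with $w$ in the role of the witnessing word $z$ in the proposition.

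For the converse direction (which falls out of the same equivalence), one checks directly: if $x=\overline{w}\,\sha\,w^R$ for some $w$, then with $z=\overline{w}$ and $y=w^R$ we have $\overline{y}^R=\overline{w^R}^R=\overline{w}=z$ and similarly $\overline{z}^R=\overline{\overline{w}}^R=w^R=y$, so the displayed equality $z\,\sha\,y=\overline{y}^R\,\sha\,\overline{z}^R$ holds and thus $x=\overline{x}^R$. I expect no real obstacle here; the only point requiring care is bookkeeping the two identities ``$(u\,\sha\,v)^R=v^R\,\sha\,u^R$'' and ``$\overline{u\,\sha\,v}=\overline{u}\,\sha\,\overline{v}$,'' together with the fact that a perfect shuffle decomposition into two equal-length words is unique. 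Both are immediate from the array definition of $\sha$ given in the introduction, so the proof is short and essentially identical in spirit to that of Proposition~\ref{pro:sha}.
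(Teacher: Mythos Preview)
Your argument is correct: the identities $(u\,\sha\,v)^R=v^R\,\sha\,u^R$ and $\overline{u\,\sha\,v}=\overline{u}\,\sha\,\overline{v}$, together with the uniqueness of the shuffle decomposition, give the equivalence cleanly. The paper does not actually supply a proof of this proposition (nor of Proposition~\ref{pro:sha}); it simply declares these analogs ``straightforward,'' so there is nothing to compare against beyond noting that your direct computation is precisely the kind of verification the authors evidently had in mind.
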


A binary word $x$ is called an {\it antipalstar} if it is the
concatenation of $1$ or more antipalindromes. A nonempty antipalstar is
called {\it prime} if it is not the product of two or more even-length
antipalindromes. An antipalstar is always an antipalindrome, but the
converse is false, because for any antipalindromes $x, y$ the word
$xyx$ will be an antipalindrome. The following theorem is a counterpart
of the decomposition property for \emph{palstars} \cite{KMP77}.

\begin{theorem}
Every antipalstar can be factored uniquely as the concatenation of
prime antipalstars.
\end{theorem}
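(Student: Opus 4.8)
The plan is to mimic the classical unique-factorization argument for palstars \cite{KMP77}, adapted to antipalindromes. Recall that a palstar decomposes uniquely into prime palstars essentially because of a ``first-return'' argument: reading the palstar from the left, the shortest prefix that is itself a palstar is forced, and one peels it off. I would set up the same induction on the length of the antipalstar $w$. If $w=\varepsilon$ there is nothing to prove; otherwise let $w_1$ be the shortest nonempty prefix of $w$ that is an antipalstar. The core claims are: (a) $w_1$ is a prime antipalstar, and (b) in \emph{every} factorization of $w$ into prime antipalstars, the first factor equals $w_1$. Given (a) and (b), the word $w_1^{-1}w$ is a strictly shorter antipalstar, the induction hypothesis applies, and uniqueness follows.

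For claim (a): if $w_1$ were not prime, it would factor as $w_1=uv$ with $u,v$ nonempty even-length antipalindromes, and then $u$ is a strictly shorter nonempty prefix of $w$ that is an antipalstar (a single antipalindrome), contradicting minimality of $w_1$. So $w_1$ is automatically prime. The real content is claim (b). Suppose $w=p_1p_2\cdots p_r$ with each $p_j$ a prime antipalstar. By minimality, $|w_1|\le|p_1|$, so $w_1$ is a prefix of $p_1$. If $w_1=p_1$ we are done, so assume $w_1$ is a proper nonempty prefix of $p_1$. Now $p_1$ is an antipalstar, $p_1=q_1q_2\cdots q_s$ for antipalindromes $q_i$; choosing the decomposition of $p_1$ so that $w_1$ is as long a ``union of initial $q_i$'s'' as possible, one either lands exactly on a block boundary (so $w_1=q_1\cdots q_t$ is an antipalstar that is a proper prefix of the \emph{prime} antipalstar $p_1$ — this needs to be excluded) or $w_1$ ends strictly inside some $q_t$. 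To get a contradiction I will exploit that $w_1$ is itself an antipalstar: $w_1$ ends with an even-length antipalindrome factor that begins at a position dictated by the structure of $p_1$, and the overlap of this antipalindromic suffix of $w_1$ with the antipalindrome $q_t$ (together with Proposition~\ref{pro:auv} and Lemma~\ref{lysch}) should force $p_1$ to split as a nontrivial product of even-length antipalindromes, contradicting primality of $p_1$.

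More concretely, here is the overlap lemma I would isolate and prove first, as the technical heart: \emph{if $x$ and $y$ are even-length antipalindromes, $z=xz'=z''y$ with $|z'|,|z''|>0$, and $z$ is an antipalindrome, then $z$ is a product of two or more even-length antipalindromes}; equivalently, a prime antipalstar cannot have two distinct factorizations that ``straddle'' in this way. The proof of this lemma runs parallel to Lemma~\ref{evenrepr}: from $z=xz'=z''y$ one gets, by Lemma~\ref{lysch}(ii), that the overlapping parts are powers of a common word $st$ with $s,t$ short antipalindromes (using that negation-reversal of an antipalindrome prefix is an antipalindrome suffix, exactly as palindromes behave under reversal), and then $z$ collapses to a power $(st)^{\ge 2}$ with $|st|$ even, which is a product of even-length antipalindromes by Proposition~\ref{prop8} and Proposition~\ref{pro:auv}.

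The main obstacle I anticipate is the bookkeeping in claim (b): carefully tracking which block boundaries of the (non-unique) antipalindrome decomposition of $p_1$ the prefix $w_1$ can land on, and making sure every case reduces to the overlap lemma or to minimality of $w_1$. A secondary subtlety is the parity issue — antipalindromes always have even length, and a prime antipalstar is defined via even-length antipalindrome factorizations, so one must be attentive that all the $q_i$ and the suffix antipalindromes of $w_1$ are even-length (which they are, being antipalindromes), so no odd-length pathology like that of \eqref{aabaab} arises here; in fact the antipalindrome case is cleaner than the palindrome case precisely because there is only one parity. Once the overlap lemma is in hand, the induction is routine, and I would present the write-up in the order: overlap lemma; definition of $w_1$ and proof of primality (a); proof of forced first factor (b) via the lemma; induction to conclude existence-plus-uniqueness.
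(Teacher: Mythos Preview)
Your reduction is the right one: existence is clear, and uniqueness amounts to showing that no prime antipalstar has another prime antipalstar as a proper prefix. That is exactly the statement the paper isolates, so the overall architecture matches. Two remarks, one cosmetic and one substantive.

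First, your discussion of decomposing $p_1=q_1\cdots q_s$ into antipalindromes is idle: $p_1$ is \emph{prime}, hence by definition a single antipalindrome, so $s=1$ and there are no block boundaries to land on. The situation is simply that $w_1$ and $p_1$ are antipalindromes with $w_1$ a proper prefix of $p_1$; that is the whole story, and your overlap lemma (applied with $z=p_1$ and $x=y=w_1$, the suffix copy coming for free from $p_1=\overline{p_1}^R$) is indeed what you need.

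Second, and this is the real gap: the overlap lemma is true, but your proposed proof of its overlapping case is not. You claim that in the overlap regime ``$z$ collapses to a power $(st)^{\ge 2}$ with $|st|$ even''. This is false. Take $z=0110011001$: it is a primitive antipalindrome of length $10$, it has the antipalindrome $x=011001$ as a proper prefix (and suffix) with $|x|=6>|z|/2$, yet $z$ is not a power of anything. It \emph{is} non-prime, since $z=(01)(10)(01)(10)(01)$, so the lemma's conclusion holds --- but not through the mechanism you describe, and Lemma~\ref{lysch}(ii) alone will not force a power here because the period $|z|-|x|=4$ does not divide $|z|=10$. The paper sidesteps this by a descent rather than a structural collapse: choose $v$ minimal among all antipalindromic proper prefixes of prime antipalstars; if $|v|\le|u|/2$ then $u=v\,m\,v$ with $m$ an antipalindrome or empty, contradicting primality of $u$; if $|v|>|u|/2$, write $u=x\overline{x}^R$ and $v=xy$ with $\overline{x}^R=yz$, observe $v=\overline{v}^R=\overline{y}^R yz$, and conclude that $\overline{y}^R y$ is a strictly shorter nonempty antipalindromic prefix, contradicting the minimality of $v$. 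You can recast this as a direct inductive proof of your overlap lemma (the second case hands you a shorter instance), but you cannot short-circuit it by asserting a power structure.
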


\begin{proof}
If some antipalstar has multiple factorizations into prime
antipalstars, then some prime antipalindrome $u$ has another prime
antipalindrome $v$ as a prefix. Let $v$ be the shortest antipalindrome
in such pairs. If $|v|\le |u|/2$, then $u=vzv$, where $z$ is either
empty or an antipalindrome; this contradicts the primality of $u$. Let
$|v|>|u|/2$, $u=x\overline x^R$, $v=y\overline y^R$. Then $v=xy$,
$\overline x^R=yz$ for some nonempty $y,z$. Since $v$ is an
antipalindrome, we have $v=\overline{xy}^R=\overline y^R\overline
x^R=\overline y^Ryz$. Since $\overline y^Ry$ is an antipalindrome, this
contradicts the minimality of $v$. Thus, no prime antipalindrome has
another prime antipalindrome as a prefix.
\qed \end{proof}

Next we look at the number of antipalindromic factors in a word.

\begin{theorem} \label{the:arich1}
A word $w$ of length $n \geq 1$ has at most $n-1$ distinct nonempty factors that are antipalindromes. 
\end{theorem}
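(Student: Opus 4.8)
The plan is to mimic the proof of the analogous statement for palindromes (Proposition~\ref{pro:rich1}(iii) and the richness machinery), but to exploit the fact that an antipalindrome has no ``central letter''. Recall that the standard argument that a word of length $n$ has at most $n$ distinct nonempty palindromic factors assigns to each prefix $w[1..i]$ its longest palindromic suffix; this gives an injection from $\{1,\dots,n\}$ to the set of nonempty palindromic factors, because $w[1..1]$ contributes the single letter $w[1]$ and each step adds at most one genuinely new palindrome. For antipalindromes I would run the same ``longest antipalindromic suffix'' assignment, but now the prefix $w[1..1]$ contributes \emph{nothing}: a single letter is never an antipalindrome (antipalindromes have even length). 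So the map is defined only on those indices $i$ for which $w[1..i]$ actually has an antipalindromic suffix, and in any case the total count is at most $n-1$ rather than $n$, provided I can show the step-by-step bound ``at most one new antipalindrome per added letter''.

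The key lemma to isolate is therefore: \emph{if $u$ is the longest antipalindromic suffix of $w[1..i]$, then $u$ occurs in $w[1..i]$ only as a suffix} — equivalently, every antipalindromic suffix of $w[1..i]$ other than the longest one already occurred earlier in $w[1..i]$. The proof should be the antipalindromic version of the palindrome fact: suppose $u$ is the longest antipalindromic suffix and $v$ is a strictly shorter antipalindromic suffix; then $v$ is a suffix of $u$, and since $u=\overline u{}^R$, $\overline v{}^R$ is a prefix of $u$. Using Lemma~\ref{lysch}(ii) on the overlap of $u$ with $v$ (reading $u$ as $v$ times a prefix, and the prefix $\overline v{}^R$ of $u$) I would extract a period structure of $u$ of the form $u=(xy)^j x$ with $x,y$ suitably ``anti''-symmetric, and then locate an earlier occurrence of $v$ inside $u$ itself. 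Once this lemma is in hand, the counting is immediate: distinct prefixes $w[1..i]$ with $2\le i\le n$ have distinct longest antipalindromic suffixes whenever both are defined (because the longest antipalindromic suffix of $w[1..i]$ has length $\le i$ and ``uses'' position $i$ as its last letter, so it cannot equal the longest antipalindromic suffix of a shorter prefix), and position $1$ is wasted; hence at most $n-1$ distinct antipalindromic factors arise, and every antipalindromic factor of $w$ is the longest antipalindromic suffix of some prefix.

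The main obstacle is the overlap analysis in the key lemma: the negation $\overline{\cdot}$ interacts with reversal and with the Lyndon–Sch\"utzenberger decomposition in a slightly more delicate way than plain reversal does, so I would need to be careful that the words $x,y$ produced really are antipalindromes (or, as appropriate, that $xy$ is), rather than merely palindromes. An alternative, and perhaps cleaner, route to the same bound: observe that $w$ is an antipalindromic factor of some word iff $w\overline{w}{}^R$-type symmetry holds, and relate antipalindromic factors of $w$ to \emph{palindromic} factors of the word $w' = w[1]\,\overline{w[2]}\,w[3]\,\overline{w[4]}\,\cdots$ obtained by negating every other letter; if this transformation turns antipalindromes of even length into palindromes, one could invoke the known bound ``$n$ palindromic factors'' for $w'$ and then argue that the single-letter palindromes of $w'$ do not correspond to antipalindromes of $w$, shaving the count down to $n-1$. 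I would try the direct argument first and fall back on this reduction if the overlap bookkeeping becomes unwieldy.
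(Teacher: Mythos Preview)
Your overall strategy---count antipalindromic factors by the end-position of their leftmost occurrence, and note that position~$1$ is never such an end-position---is exactly the paper's argument. But you drastically overestimate the difficulty of the key lemma. No Lyndon--Sch\"utzenberger decomposition or alternating-negation trick is needed: if $v$ is an antipalindromic suffix of an antipalindrome $u$ with $|v|<|u|$, then $v^R$ is a prefix of $u^R$, hence $\overline{v^R}=v$ is a prefix of $\overline{u^R}=u$, so $v$ already occurs earlier inside $u$. That two-line observation is the entire content; the paper's proof is five lines total.

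One minor correction: your claim that ``distinct prefixes $w[1..i]$ have distinct longest antipalindromic suffixes'' is false (e.g.\ in $w=01001$, both $w[1..2]$ and $w[1..5]$ have longest antipalindromic suffix $01$). Fortunately you do not need it: what you actually use is the surjectivity statement that every antipalindromic factor is the longest antipalindromic suffix of \emph{some} prefix, together with the fact that the domain of this map is contained in $\{2,\dots,n\}$. That is correct and suffices.
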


\begin{proof}
For any nonempty antipalindromic factor of $w$, consider its leftmost
occurrence in $w$. We show that no two such occurrences end in the same
position in $w$. If they did, say $x$ and $y$, with $|x| < |y|$, then
$x$ is a suffix of $y$. But then $x^R=\overline{x}$ is a prefix of
$y^R=\overline{y}$. So $x$ is a prefix of $y$. Since $|x| < |y|$, we
have found an occurrence of $x$ to the left of its leftmost occurrence,
a contradiction.

Since no nonempty antipalindromic factor of $w$ ends at position 1, the
number of possible end positions of such factors is at most $n-1$,
whence the result.
\qed \end{proof}

One can introduce the notion of an \emph{a-rich word} as a word with
maximum possible number of antipalindromic factors. But the following
theorem shows that these words are trivial, in contrast with the rich
words.

\begin{theorem}
For all $n \geq 1$, there are exactly two a-rich words of length $n$. 
\end{theorem}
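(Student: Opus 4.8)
The plan is to show that an a-rich word of length $n$ achieves exactly $n-1$ distinct nonempty antipalindromic factors, and that this forces an extremely rigid structure. First I would recall from the proof of Theorem~\ref{the:arich1} that the bound $n-1$ comes from assigning to each nonempty antipalindromic factor its leftmost occurrence and observing that distinct such occurrences end at distinct positions in $\{2,3,\ldots,n\}$. Hence $w$ is a-rich if and only if \emph{every} position $j\in\{2,\ldots,n\}$ is the endpoint of the leftmost occurrence of some antipalindromic factor; equivalently, for every $j\ge 2$, the suffix $w[i..j]$ is an antipalindrome for some $i$, and moreover this is its first occurrence. Since antipalindromes have even length, for every even $j$ there must be an antipalindromic factor ending at $j$, and for every odd $j$ as well — but a word of odd length has no antipalindromic factor, so when $j$ is odd the antipalindrome ending at $j$ has even length and thus starts at an even position, and when $j$ is even it starts at an odd position. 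In particular, taking $j=2$, we need $w[1..2]\in\antipal$, so $w[1..2]\in\{{\tt 01},{\tt 10}\}$; this gives the two candidates, and the whole argument will be to show each choice of $w[1..2]$ extends uniquely.

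The key step is an induction on $n$ showing that once $w[1..j]$ is fixed and a-rich-so-far (meaning positions $2,\ldots,j$ are all covered), the letter $w[j+1]$ is forced. The cleanest route: I claim the two a-rich words of length $n$ are the prefixes of the two infinite words $p = {\tt 010101}\cdots$ and $\overline p = {\tt 101010}\cdots$ — i.e. the alternating words. Indeed, in the alternating word $w[1..n]$ of even length, every factor of even length $2\ell$ starting at position $i$ is $({\tt 01})^\ell$ or $({\tt 10})^\ell$, which is an antipalindrome; one checks that for each endpoint $j$ the factor $w[j-1..j]$ of length $2$ is an antipalindrome, and its leftmost occurrence ends at $j=2$ or $j=3$ depending on parity — wait, that over-counts. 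Let me instead count directly: in the alternating word, the distinct nonempty antipalindromic factors are exactly $({\tt 01})^\ell$ and $({\tt 10})^\ell$ for $1\le\ell\le\lfloor n/2\rfloor$, giving $2\lfloor n/2\rfloor$ of them, which equals $n-1$ precisely when... no, $2\lfloor n/2\rfloor = n$ for even $n$ and $n-1$ for odd $n$. So the alternating word is \emph{not} obviously a-rich for even $n$; I must recount, because some of these factors may not all have leftmost occurrences ending at distinct positions, or the bound analysis needs care. The hard part will be pinning down which words actually meet the bound, not merely guessing the alternating word.

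So the real approach is the forward induction without guessing the answer. Assume $w[1..n]$ is a-rich; I show $w[1..n-1]$ is a-rich and that $w[1..n-1]$ determines $w[n]$. The first claim follows since deleting the last letter removes only antipalindromic factors ending at position $n$, and by the leftmost-occurrence injection there is exactly one such leftmost occurrence, so $C_{\antipal\text{-factors}}$ drops by exactly $1$, keeping the word a-rich. For the determinacy: position $n$ must be the endpoint of the leftmost occurrence of some antipalindrome $w[i..n]$. Consider the shortest antipalindromic suffix $s=w[i..n]$; I would argue, paralleling Proposition~\ref{pro:rich1}(iii) and the argument in Theorem~\ref{the:arich1}, that if $w[n]$ had the ``wrong'' value then either no antipalindrome ends at $n$, or the one that does already occurred earlier (because a short repeated block reappears), contradicting a-richness. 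Concretely: the length-$2$ suffix $w[n-1..n]$ is an antipalindrome only if $w[n]=\overline{w[n-1]}$; if instead $w[n]=w[n-1]$, every antipalindromic suffix has length $\ge 4$ and one shows by a Lyndon–Schützenberger / run-length argument that it must have occurred before unless the prefix structure is too constrained — this is where I expect the main obstacle, handling the case analysis on run lengths cleanly.

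Once determinacy is established, induction gives: there are at most two a-rich words of each length (one for each value of $w[1..2]\in\{{\tt 01},{\tt 10}\}$), and since the construction is forced and consistent (the two families are negations of each other by the symmetry $x\mapsto\overline x$, which preserves $\antipal$ and hence a-richness), there are exactly two. I would close by exhibiting them explicitly — they turn out to be $({\tt 01})^{\lceil n/2\rceil}$ truncated to length $n$ and its negation, i.e. the two alternating words — and verifying directly that each has exactly $n-1$ distinct antipalindromic factors, which handles the existence half and confirms the count. The main obstacle, as noted, is the run-length case analysis in the determinacy step; everything else is bookkeeping with the leftmost-occurrence injection from Theorem~\ref{the:arich1}.
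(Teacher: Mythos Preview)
Your inductive strategy is viable, but you have left the key step open and you are overestimating its difficulty. The paper's proof is a one-line observation that short-circuits your entire induction: if $w$ contains any factor $aa$ (with $a\in\{0,1\}$), take the \emph{leftmost} such double letter and let it end at position $j$. Any antipalindrome ending at $j$ has length $\ge 2$; the length-$2$ suffix $aa$ is not an antipalindrome, and a longer one would end in $aa$ and therefore begin with $\overline a\,\overline a$, producing a double letter strictly to the left of position $j$, a contradiction. Hence no antipalindrome ends at $j$, so $w$ is not a-rich. This immediately forces $w$ to be one of the two alternating words, and the direct verification finishes the proof. Your ``determinacy'' step is exactly this observation specialized to $j=n$: once you know (by your own induction) that $w[1..n-1]$ is alternating, the case $w[n]=w[n-1]$ creates the \emph{only} double letter in $w$, so no $\overline a\,\overline a$ exists anywhere in $w$ and hence no antipalindrome of length $\ge 4$ can end at $n$ at all---no Lyndon--Sch\"utzenberger or run-length case analysis is needed.

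Your count of antipalindromic factors in the alternating word is also off. In $w=0101\cdots$ of length $n$, the factors $(01)^\ell$ occur for $1\le \ell\le \lfloor n/2\rfloor$, but $(10)^\ell$ must start at an even position and so occurs only for $1\le \ell\le \lfloor (n-1)/2\rfloor$; the total is $\lfloor n/2\rfloor+\lfloor (n-1)/2\rfloor=n-1$ for all $n$, not $2\lfloor n/2\rfloor$. With this correction, the existence half goes through, and the paper's verification (identifying the leftmost occurrence of each $(01)^\ell$ and $(10)^\ell$) shows these $n-1$ factors hit every end position $2,\ldots,n$ exactly once.
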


\begin{proof}
Let us prove that any word having the factor $00$ or $11$ is not
a-rich. Consider such a word $w$ and the position in which its leftmost
factor of the form $aa$ ends. An antipalindrome ending with $aa$ must
begin with $\overline {aa}$; hence, $w$ has no antipalindrome ending in
the chosen position. From the proof of Theorem~\ref{the:arich1} we know
that an a-rich word contains nonempty antipalindromes ending at every
position, except position 1, so $w$ is not a-rich.

Thus, only two words of each length $n$ remain. These words are
$(10)^k$ and $(01)^k$ if $n = 2k$ is even, and $(10)^k 1$ and $(01)^k
0$ if $n = 2k+1$ is odd.

To see that these words have $n{-}1$ distinct antipalindromes, consider a word $w=1010 \cdots$; the other word admits the same proof.  Note that $w[1..2k]=(10)^k$ and $w[2..2k{+}1]=(01)^k$ for $k \geq 1$ are the leftmost occurrences of antipalindromes. This gives an antipalindrome ending at every position except position $1$, for the  total of $n{-}1$. \qed
\end{proof}

We call a word $x$ {\it creaky} if it is a conjugate of its reversed complement $\overline{x}^R$.

\begin{proposition}
The word $x$ is in $\antipal^2$ if and only if it is creaky.
\end{proposition}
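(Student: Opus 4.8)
The plan is to mimic the proof of the palindromic analog (``$x \in \pal^2$ iff $x$ is credible'') almost verbatim, replacing reversal by reversed complement throughout. The statement to prove is: $x \in \antipal^2$ if and only if $x$ is a conjugate of $\overline{x}^R$.

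\medskip

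\noindent\textbf{Forward direction.} First I would suppose $x \in \antipal^2$, so $x = uv$ where $u,v$ are antipalindromes, i.e.\ $u = \overline{u}^R$ and $v = \overline{v}^R$. Then I compute $\overline{x}^R = \overline{(uv)}^R = \overline{v}^R\,\overline{u}^R = vu$. Hence $x = uv$ and $\overline{x}^R = vu$ are conjugates, so $x$ is creaky. This uses only the elementary identities $\overline{ab} = \overline{a}\,\overline{b}$ and $(ab)^R = b^R a^R$, hence $\overline{ab}^R = \overline{b}^R\,\overline{a}^R$.

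\medskip

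\noindent\textbf{Reverse direction.} Conversely, suppose $x$ is creaky, so there exist words $u,v$ with $x = uv$ and $\overline{x}^R = vu$. On the other hand, $\overline{x}^R = \overline{uv}^R = \overline{v}^R\,\overline{u}^R$. Comparing the two expressions $vu = \overline{v}^R\,\overline{u}^R$ and matching lengths (the factor $v$ has the same length as $\overline{v}^R$, and $u$ the same length as $\overline{u}^R$), I conclude $v = \overline{v}^R$ and $u = \overline{u}^R$. Thus $u$ and $v$ are antipalindromes and $x = uv \in \antipal^2$. One minor point to state carefully is the degenerate case where $u$ or $v$ is empty: if, say, $v = \varepsilon$ then $x = u$ and $\overline{x}^R = u$, so $x$ is itself an antipalindrome, hence trivially in $\antipal^2$ (writing $x = \varepsilon\cdot x$); the same holds if $u = \varepsilon$. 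So the equivalence holds in all cases.

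\medskip

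\noindent There is no real obstacle here: the argument is a direct transcription of the palindromic case, and the only thing to watch is that the conjugation identity for reversed complement has the correct form $\overline{ab}^R = \overline{b}^R\,\overline{a}^R$ (the same ``order-reversing'' shape as for plain reversal), which makes the length-matching step go through verbatim. I would keep the write-up to two short paragraphs, exactly parallel to the proof of the corresponding proposition for $\pal^2$.
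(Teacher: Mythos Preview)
Your proof is correct and essentially identical to the paper's: both directions use the identity $\overline{uv}^R=\overline{v}^R\,\overline{u}^R$ and a length-matching comparison, just as in the palindromic analog. Your explicit remark on the degenerate empty-factor case is a harmless addition the paper omits.
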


\begin{proof}
Suppose $x \in \antipal^2$.  Then $x = uv$ where $u, v$ are antipalindromes. So $\overline{x^R} = \overline{v^R} \overline{u^R} = vu$, a conjugate of $x$.

For the other direction, suppose $x$ is a conjugate of $\overline{x^R}$.  Then $x = uv$ and $\overline{x^R} = vu$. From $x = uv$ we get $x^R = v^R u^R$ and then $\overline{x^R} = \overline{v^R} \overline{u^R}$. It follows that $v = \overline{v^R}$ and $u = \overline{u^R}$. \qed
\end{proof}

\begin{proposition} \label{pro:xmn}
For all integers $m, n \geq 1$, the word $x^m$ is creaky iff  $x^n$ is creaky.
\end{proposition}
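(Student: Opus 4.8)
The plan is to reduce the statement to a single fact about conjugacy of powers. By definition a word $w$ is creaky iff $w$ is a conjugate of $\overline{w}^R$, and taking the reversed complement commutes with taking powers: $\overline{x^m}^R=(\overline{x}^R)^m$, since negation distributes over concatenation and $(u^m)^R=(u^R)^m$. Hence $x^m$ is creaky iff $x^m$ is a conjugate of $(\overline{x}^R)^m$. So everything follows once we prove the following claim: \emph{for nonempty words $u,v$ with $|u|=|v|$ and any integer $k\ge 1$, $u^k$ is a conjugate of $v^k$ if and only if $u$ is a conjugate of $v$.}

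Granting the claim, apply it with $u=x$ and $v=\overline{x}^R$ (these have equal length). We get that for every exponent $m\ge 1$, ``$x^m$ is creaky'' is equivalent to ``$x$ is a conjugate of $\overline{x}^R$'', a condition that does not depend on $m$; hence $x^m$ is creaky iff $x^n$ is. The degenerate case $x=\varepsilon$ is trivial, since $\varepsilon$ and all its powers are creaky. The ``if'' direction of the claim is a short computation: writing $u=pq$, $v=qp$, one has $u^k=p(qp)^{k-1}q=p\cdot\bigl((qp)^{k-1}q\bigr)$ and $v^k=(qp)^k=\bigl((qp)^{k-1}q\bigr)\cdot p$, so $u^k$ and $v^k$ are conjugate (cyclic shift by $|p|$).

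For the ``only if'' direction I would use primitive roots together with Lemma~\ref{conjnum}. Write $u=z^i$ with $z$ primitive, so $u^k=z^{ik}$. By Lemma~\ref{conjnum} the $|z|$ conjugates of $z$ are pairwise distinct; hence the words $(z')^{ik}$, with $z'$ ranging over the conjugates of $z$, are $|z|$ pairwise distinct conjugates of $u^k$ (conjugates by the ``if'' part just proved, distinct because $w\mapsto w^{ik}$ is injective on words of a fixed length). Since the conjugacy class of $u^k=z^{ik}$ has exactly $|z|$ elements (Lemma~\ref{conjnum} again, as $z$ is the primitive root of $u^k$), these are all of them; in particular $v^k=(z')^{ik}$ for some conjugate $z'$ of $z$. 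As $z'$ is primitive, it is the primitive root of $v^k$, hence also of $v$, so $v=(z')^j$ for some $j\ge1$; comparing lengths, $j|z'|=|v|=|u|=i|z|=i|z'|$ forces $j=i$, so $v=(z')^i$, which is a conjugate of $z^i=u$ by the ``if'' part.

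I expect the ``only if'' half of the claim to be the only delicate point, and inside it the two standard facts invoked — that a word shares its primitive root with all of its powers, and that the conjugates of $z^i$ (for $z$ primitive) are precisely the $i$-th powers of the conjugates of $z$ — both of which are folklore, the latter being implicit in Lemma~\ref{conjnum} and its use in the proof of Theorem~\ref{the:conj}. A more self-contained alternative would start from $u^k=pq$, $v^k=qp$, observe that $u^k p=p\,v^k$, and invoke Lemma~\ref{lysch}(ii); but that route merely re-expresses the conjugacy and still requires the same primitive-root bookkeeping, so I would prefer the argument above.
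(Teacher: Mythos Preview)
Your proof is correct. It differs from the paper's in that you abstract out a general lemma — $u^k$ conjugate to $v^k$ iff $u$ conjugate to $v$, for $|u|=|v|$ — and prove its harder direction via primitive roots and Lemma~\ref{conjnum}. The paper instead works directly with the specific conjugacy at hand: writing $x^n=uv$ and $\overline{(x^n)^R}=vu$, it decomposes $u=x^j x'$ and $v=x'' x^{n-j-1}$ along the block structure of $x^n$ (with $x'x''=x$), computes $vu=(x''x')^n$, and concludes $\overline{x}^R=x''x'$ from the injectivity of $w\mapsto w^n$ on words of a fixed length. Your route yields a reusable lemma and makes the structure of the conjugacy class explicit; the paper's route is shorter and avoids invoking primitive roots altogether. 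Both are sound, and your remark that the alternative via Lemma~\ref{lysch}(ii) would end up doing the same bookkeeping is accurate — indeed the paper's argument is essentially that alternative, carried out by hand.
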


\begin{proof}
It suffices to prove the result for $m = 1$.  Suppose $x$ is creaky. Then there exist $u, v$ such that $x = uv$ and $\overline{x}^R = vu$. Then $x^n = (uv)^n$, which is clearly a conjugate of $(vu)^n = (\overline{x}^R)^n = \overline{(x^n)^R}$.

Suppose $x^n$ is creaky.  Then there exist $u, v$ such that $x^n = uv$ and $\overline{ (x^n)^R} = vu$.  Then there exist an integer $m$ and words $x', x''$ such that $u = x^m x'$, $v = x'' x^{n-m-1}$, where $x' x'' = x$.  So $vu = (x'' x')^n$, and it follows that $\overline{{x^n}^R} = x'' x'$,
a conjugate of $x$. \qed
\end{proof}

In analogy with palindromic pairs, we define factorization of a creaky word $w$ as a pair of antipalindromes $u$, $v$, where $v\ne\varepsilon$ and $uv=w$.
The following analog of Theorem~\ref{the:pp1} holds.

\begin{theorem} \label{the:app}
A creaky word $w$ has $m$ factorizations if and only if $w=z^m$ for a primitive word $z$.
\end{theorem}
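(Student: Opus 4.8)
The plan is to mimic the proof of Theorem~\ref{the:pp1}, which is the palindromic analog, but to carry everything through the reversed-complement map $x\mapsto\overline{x}^R$ in place of plain reversal. The backbone is the same: show that the factorizations of a creaky word $w$ are governed by the periods of $w$, and that there are exactly as many of them as the multiplicity $m$ in the primitive decomposition $w=z^m$. First I would establish the ``if'' direction. Suppose $w=z^m$ with $z$ primitive. We need $w$ to be creaky at all, which follows from Proposition~\ref{pro:xmn} together with the observation that $w$ creaky is equivalent to $w\in\antipal^2$; so it suffices to know that \emph{some} power of $z$ is in $\antipal^2$, and then exhibit the $m$ factorizations explicitly. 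If $z=xy$ with $x,y$ antipalindromes (which, as I explain below, one gets once a single factorization of $w$ is in hand), then the $m$ cuts
\begin{equation*}
w=x\cdot y(xy)^{m-1}=(xy)x\cdot y(xy)^{m-2}=\cdots=(xy)^{m-1}x\cdot y
\end{equation*}
each split $w$ into two antipalindromes (here $y(xy)^{j}$ is an antipalindrome because a product of the form $\overline{z}^R\cdots$ in the pattern respected by antipalindromes collapses, exactly as in Lemma~\ref{evenpair}'s sufficiency argument), and these $m$ factorizations are pairwise distinct since the first component has $m$ distinct lengths. So $w$ has at least $m$ factorizations.

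Next I would prove that $w=z^m$ has \emph{at most} $m$ factorizations, and simultaneously that a creaky word has $z=xy$ for antipalindromes $x,y$; this is the heart of the argument. Suppose $u_1v_1=u_2v_2=w$ are two factorizations into antipalindromes with $0<|v_1|<|v_2|$. Then $v_1$ is a suffix of $v_2$; applying $\overline{(\cdot)}^R$ and using that both $v_i$ are antipalindromes, $\overline{v_1}^R=v_1$ is a prefix of $\overline{v_2}^R=v_2$. So $v_1$ is both a prefix and a suffix of $v_2$, and Lemma~\ref{lysch}(ii) gives $v_1=(xy)^s x$, $v_2=(xy)^{s+1}x$ with $x\ne\varepsilon$; moreover $|v_2|-|v_1|=|xy|$ is a period of $w$, and one checks (again via the antipalindrome identity, as in the proof of Lemma~\ref{evenrepr}) that $x$ and $y$ are themselves antipalindromes. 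Iterating this with the full list of factorizations of $w$ ordered by length, all the ``period jumps'' are multiples of the smallest one, which turns out to equal $|z|$; this forces the number of factorizations to be exactly the number of multiples of $|z|$ that are $\le|w|$, i.e.\ $m=|w|/|z|$. The one subtlety is showing the minimal period arising this way is $|z|$ and not a proper divisor: since $z$ is primitive, $|z|$ is the minimal period of $w=z^m$ by Lemma~\ref{periods} (the argument is identical to the opening paragraph of Lemma~\ref{evenprim}'s proof), so no shorter period can occur.

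The main obstacle, I expect, is the bookkeeping in the reversed-complement identities: plain reversal satisfies $(uv)^R=v^Ru^R$, and the map $x\mapsto\overline{x}^R$ satisfies the same antihomomorphism law, $\overline{uv}^R=\overline{v}^R\,\overline{u}^R$, so all the ``$u,v$ are palindromes'' deductions in Lemma~\ref{evenrepr} and Lemma~\ref{evenpair} transfer verbatim with ``palindrome'' replaced by ``antipalindrome''. The only place one must be a little careful is parity/nonemptiness: antipalindromes are automatically even-length, so $v_1=(xy)^s x$ forces $|x|$ to make $|v_1|$ even, and one should confirm that the degenerate cases ($y=\varepsilon$, or $x$ a single letter) are handled — but $x=\varepsilon$ is excluded by Lemma~\ref{lysch}(ii) and $y=\varepsilon$ would make $v_1$ and $v_2$ powers of the antipalindrome $x$, which is fine and consistent with the count. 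Once these identities are checked, the proof is a direct translation of the palindromic case and I would present it as such, pointing to the parallel lemmas rather than re-deriving them.
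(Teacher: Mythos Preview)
Your approach is correct and takes a genuinely different route from the paper's. The paper proves that a primitive creaky word has a unique factorization by \emph{induction on $|w|$}: given two factorizations $u_1v_1=u_2v_2$ with $|u_1|<|u_2|$, it writes $u_1=(xy)^ix$, $u_2=(xy)^{i+1}x$ on the left and $v_2=(x'y')^{i'}x'$, $v_1=(x'y')^{i'+1}x'$ on the right, observes that the overlapping middle piece satisfies $yx=x'y'$, and treats this as a \emph{shorter} creaky word carrying two factorizations, to which the inductive hypothesis applies; unwinding gives $w=(zt)^M$ with the $u_i$ of the announced shape. It then reads off exactly $m$ factorizations of $z^m$ from the unique factorization $z=z'z''$. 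You instead transplant the period argument of Lemma~\ref{evenrepr} through the involution $x\mapsto\overline{x}^R$: any two factorizations produce a genuine period $p=|v_2|-|v_1|$ of $w$ (via $v_1$ being both prefix and suffix of $v_2$, the symmetric statement for the $u_i$, and gluing over the length-$p$ overlap), and then you bound the number of cut positions. Your route avoids the induction entirely and is a clean counting argument; the paper's route pays for the induction by extracting the finer structural fact that \emph{all} factorizations are of the form $(xy)^jx\cdot y(xy)^{m-1-j}$ for a single pair of antipalindromes $x,y$.

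One small cleanup: your ``iterating'' step is stated a bit roundaboutly. You do not actually need that each period jump is a \emph{multiple} of $|z|$ (which would drag in the length hypothesis of Lemma~\ref{periods} and is awkward when a jump exceeds $|w|/2$); you only need that every such jump is a period of $w$ and hence $\ge |z|$, since $|z|$ is the minimal period of $w=z^m$. Then pigeonhole on consecutive cut positions in $[0,|w|)$ gives at most $m$ factorizations directly, matching the $m$ you already exhibited.
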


\begin{proof}
A useful observation, used several times in this proof, is that if
$v\in\antipal$, then for any word $u$ any two of the following
conditions (i) $u\in\antipal$, (ii) $u$ is a prefix of $v$, (iii) $u$
is a suffix of $v$, imply the third one.

First we take a creaky word with two different factorizations,
$w=u_1v_1=u_2v_2$, and prove that it is not primitive. We can assume
$|u_1|<|u_2|$. If $u_1=\varepsilon$, then $w\in\antipal$. By
Proposition~\ref{pro:auv}, $w$ is a nontrivial power of an
antipalindrome. Now let $u_1\ne \varepsilon$. We prove the following
fact by induction on $|w|$: for some antipalindromes $x$ and $y$, one
has $w=(xy)^m$, $u_1=(xy)^jx$, $u_2=(xy)^kx$, where $0\le j<k<m$. The
base case is trivial (the shortest creaky words have a unique
factorization), so we proceed with the inductive step.

Since $u_1\in\antipal$, it is both a prefix and a suffix of $u_2$. By
Lemma~\ref{lysch} (ii) we have $u_1=(xy)^ix$, $u_2=(xy)^{i+1}x$ for
some words $x,y$ and some $i\ge 0$. Similarly we have
$v_2=(x'y')^{i'}x'$, $v_1=(x'y')^{i'+1}x'$. Clearly,
$x,y,x',y'\in\antipal$. Furthermore, the suffix $yx$ of $u_2$ coincides
with the prefix $x'y'$ of $v_1$. So we have $w=(xy)^{i+i'+1}xx'$. If
$x'=y$, then we are done, so assume that $|x'|<|y|$ (the case
$|x'|>|y|$ is similar). Since $yx=x'y'$ are different factorizations of
a creaky word which is shorter than $w$, we apply the inductive
hypothesis to get $yx=(zt)^m$, $x'=(zt)^jz$, $y=(zt)^kz$. Then we have
$u_1=(tz)^{mi+m-k-1}t$, $u_2=(tz)^{m(i+1)+m-k-1}t$,
$w=(tz)^{m(i+i'+1)+m-k+j}$. The inductive step is finished.

We proved that a primitive creaky word has a unique factorization. Now let us take $w=z^m$ for a primitive $z$ and let $z=z'z''$ be the factorization of $z$. Then clearly all words of the form $z^iz'=(z'z'')^iz'$ and $z''z^i=z''(z'z'')^i$ are antipalindromes. Thus, $w$ has $m$ factorizations of the form $z^iz'\cdot z''z^{m-i-1}$. Conversely, suppose that $w$ has a factorization $z^i\hat z\cdot \tilde zz^{m-i-1}$. Then $\hat z,\tilde z\in\antipal$, implying $\hat z=z'$ and $\tilde z=z''$. Thus, the number of factorizations is exactly $m$.\qed
\end{proof}

Using Theorem~\ref{the:app}, it is easy to relate the growth function of $\antipal^2$ (sequence A045655 in the OEIS \cite{Slo}) to palindromic pairs.

\begin{theorem}
$C_{\antipal^2} (n)= \E(n,2)$ for all $n \geq 0$.
\end{theorem}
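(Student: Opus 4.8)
The plan is to construct an explicit length-preserving bijection between $\antipal^2$ and the set of even binary palindromic pairs, from which the claimed equality of counting functions is immediate. For a binary word $w=w[1..n]$ let $\Phi(w)$ be the word obtained by negating the letters in the even positions of $w$: thus $\Phi(w)[i]=w[i]$ for odd $i$ and $\Phi(w)[i]=\overline{w[i]}$ for even $i$. Since negating the same set of positions twice is the identity, $\Phi$ is an involution on $\{0,1\}^n$, hence a length-preserving bijection. A first routine observation is that $\Phi$ is multiplicative over factorizations whose first factor has even length: if $|u|$ is even, then a position of $uv$ that falls inside $v$ is even in $uv$ iff it is even in $v$, so $\Phi(uv)=\Phi(u)\Phi(v)$.

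The key lemma is that a binary word $w$ is an antipalindrome if and only if $\Phi(w)$ is a palindrome. This follows from a short parity computation: if $|w|=n=2m$, then for each $i$ the index $2m+1-i$ has parity opposite to that of $i$, so exactly one of the two letters $w[i]$, $w[2m+1-i]$ is negated by $\Phi$; therefore the identity $\Phi(w)[i]=\Phi(w)[2m+1-i]$ is equivalent to $w[i]=\overline{w[2m+1-i]}$. Ranging over all $i$, we get $\Phi(w)=\Phi(w)^R$ iff $w=\overline{w}^R$. Observe that any word in the image of an antipalindrome under $\Phi$ automatically has even length, matching the fact that even palindromic pairs and even-length palindromes live on even lengths; for odd $n$ there is nothing to prove, as $C_{\antipal^2}(n)=0=\E(n,2)$.

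Assembling the pieces: every antipalindrome has even length, so $w\in\antipal^2$ exactly when $w=uv$ for even-length antipalindromes $u,v$; then $\Phi(w)=\Phi(u)\Phi(v)$ is a product of two even-length palindromes, i.e.\ an even palindromic pair. Conversely, since $\Phi$ is an involution, applying it to a product of two even-length palindromes returns a product of two even-length antipalindromes, an element of $\antipal^2$. Hence $\Phi$ restricts to a bijection between $\antipal^2\cap\{0,1\}^n$ and the set of even palindromic pairs of length $n$, which gives $C_{\antipal^2}(n)=\E(n,2)$.

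I do not anticipate a genuine obstacle. The only points requiring care are the parity bookkeeping in the key lemma and the observation that the concatenation identity $\Phi(uv)=\Phi(u)\Phi(v)$ uses $|u|$ even; but this hypothesis is automatically satisfied here, since antipalindromes and even-length palindromes all have even length, so no alignment issues arise.
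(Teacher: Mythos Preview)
Your proof is correct and takes a genuinely different route from the paper. The paper's bijection negates the \emph{right half} of each antipalindrome, but this operation is not compatible with concatenation, so the paper must pass through primitive roots: it writes a creaky word as $w=z^m$ with $z$ primitive, invokes Theorem~\ref{the:app} to obtain the unique factorization $z=uv$, applies the half-negation to $u$ and $v$ separately, and then re-assembles $\hat w=\hat z^{\,m}$; inverting this needs the analogous uniqueness statement for palindromic pairs (Theorem~\ref{the:pp1}). Your map $\Phi$, negating the letters in even positions, sidesteps all of this: it is a global involution on $\{0,1\}^n$, it satisfies $\Phi(uv)=\Phi(u)\Phi(v)$ whenever $|u|$ is even, and the parity computation shows it swaps even-length antipalindromes with even-length palindromes. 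Hence membership in $\antipal^2$ is carried directly to membership in the set of even palindromic pairs, with no appeal to primitivity or unique factorization. Your argument is shorter and more elementary; the paper's approach, on the other hand, makes the role of the structure theorems~\ref{the:pp1} and~\ref{the:app} explicit.
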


\begin{proof}
Antipalindromes can be mapped to even-length palindromes with the
bijection negating the right half of a word. We can extend this idea to
creaky words and even palindromic pairs. For a creaky word $w=z^m$ with
$z$ primitive, we know that $z$ is creaky (Proposition~\ref{pro:xmn})
and has a unique factorization $z=uv$ (Theorem~\ref{the:app}). Negating
the right halves of both $u$ and $v$, we get an even palindromic pair
$\hat z=\hat u\hat v$. The even palindromic pair $\hat w=\hat z^m$
will be the image of $w$. Theorem~\ref{the:pp1} allows one to invert
this mapping and thus obtain a bijection between creaky words and even
palindromic pairs of a fixed length.
\qed \end{proof}

\bibliographystyle{splncs03}
\bibliography{my_bib}

\end{document}